\newcommand{\be}{\begin{equation}}   
\newcommand{\ee}{\end{equation}}
\newcommand{\Dl}{\Delta}          
\newcommand{\dl}{\delta}          
\newcommand{\eps}{\varepsilon}    
\newcommand{\Ga}{\Gamma}          
\newcommand{\om}{\omega}          
\renewcommand{\th}{\theta}        
\newcommand{\vf}{\varphi}         
\newcommand{\bzeta}{{\boldsymbol{\zeta}}}
\newcommand{\bC}{\mathbb{C}}      
\newcommand{\bM}{\mathbb{M}}      
\newcommand{\bN}{\mathbb{N}}      
\newcommand{\bR}{\mathbb{R}}      
\newcommand{\bZ}{\mathbb{Z}}      
\newcommand{\pp}{\mathrm{pp}}		
\newcommand{\MS}{\mathrm{MS}}	  
\newcommand{\sD}{\mathcal{D}}     
\newcommand{\sF}{\mathcal{F}}     
\newcommand{\sP}{\mathcal{P}}     
\newcommand{\bal}{\mathrm{bal}}   
\newcommand{\homog}{\mathrm{hom}} 
\newcommand{\del}{\partial}       
\newcommand{\downto}{\downarrow}  
\newcommand{\less}{\setminus}     
\newcommand{\ovl}{\overline}      
\newcommand{\ox}{\otimes}         
\newcommand{\unl}{\underline}     
\newcommand{\x}{\times}           
\renewcommand{\:}{\colon}         
\DeclareMathOperator{\sd}{sd}      
\newcommand{\set}[1]{\{\,#1\,\}}    
\newcommand{\word}[1]{\quad\mbox{#1}\quad} 
\def\wick:#1:{\mathopen:#1\mathclose:} 
\newcommand{\fd}[2]{\frac{\delta#1}{\delta\varphi(#2)}} 
\def\duo<#1,#2>{\langle#1,#2\rangle} 
\newtheorem{thm}{Theorem}[section]
 \newtheorem{lem}[thm]{Lemma}
 \newtheorem{prop}[thm]{Proposition}
 \theoremstyle{definition}
 \newtheorem{defn}[thm]{Definition}
 \theoremstyle{remark}
 \newtheorem{rem}[thm]{Remark}
 \newtheorem*{ex}{Example}
 \numberwithin{equation}{section}
\tikzset{
photon/.style={decorate, decoration={snake,amplitude=4pt, segment length=5pt}, draw=red},
particle/.style={draw=blue, postaction={decorate}, decoration={markings,mark=at position .5 with {\arrow[draw=blue]{>}}}},
antiparticle/.style={draw=blue, postaction={decorate}, decoration={markings,mark=at position .5 with {\arrow[draw=blue]{<}}}},
gluon/.style={decorate, draw=green, decoration={coil,amplitude=4pt, segment length=5pt}}
}
\begin{document}

\title[Scaling and mass expansion]
{The scaling and mass expansion}

\author[M.~D\"utsch]{Michael D\"utsch}

\address{%
Max Planck Institute for Mathematics in the Sciences\\
Inselstrasse 22\\
D-04103 Leipzig\\
Germany}

\email{michael.duetsch@theorie.physik.uni-goettingen.de}


\keywords{Perturbative quantum field theory, causal perturbation theory}


\begin{abstract}
The scaling and mass expansion (shortly 'sm-expansion') is a new
axiom for causal perturbation theory, which is a stronger version of
a frequently used renormalization condition in terms of
Steinmann's scaling degree \cite{EpsteinG73, BrunettiF00}.

If one quantizes the underlying free theory by using a Hadamard function
(which is smooth in $m\geq 0$), 
one can reduce renormalization of a massive model to the extension of 
a minimal set of mass-independent, almost homogeneously scaling distributions
by a Taylor expansion in the mass $m$. The sm-expansion is a generalization of 
this Taylor expansion, which yields this crucial simplification of the 
renormalization of massive models also for the case that one quantizes with 
the Wightman two-point function, which contains a $\log(-(m^2(x^2-ix^0 0))$-term.

We construct the general solution of the new system of axioms (i.e.~the usual 
axioms of causal perturbation theory completed by the sm-expansion),
and illustrate the method for a divergent diagram which contains a divergent 
subdiagram.
\end{abstract}

\maketitle

\section{Introduction}

In the inductive Epstein-Glaser construction of time-ordered products 
\cite{EpsteinG73,Stora82,BrunettiF00,BrunettiDF09} 
renormalization amounts to the extension of numerical distributions
$t^{(m)\,0}\in \sD'(\bR^k \less \{0\})$ to $t^{(m)}\in \sD'(\bR^k)$, where 
we assume translation invariance. By the upper index $m$ we denote the mass
of the underlying free theory. In the extension
$t^{(m)\,0}\to t^{(m)}$ one wants to maintain the property that $t^{(m)\,0}$ scales almost homogeneously 
under $(x,m)\to (\rho x,m/\rho)$ with a degree $D\in \bN$, i.e.
\be\label{eq:alm-hom-scal}
\Bigl(\sum_r x_r\del_r-m\del_m+D\Bigr)^N\,t^{(m)\,0}(x)=0
\ee
for a sufficiently large $N\in\bN$.
For an $m$-independent distribution $u^0\in \sD'(\bR^k \less \{0\})$, which scales almost homogeneously under
$x\to\rho x$ (i.e.~$u^0$ fulfils \eqref{eq:alm-hom-scal} without the $m\del_m$-term), quite a lot is known
about the extension to an $u\in \sD'(\bR^k)$ such that the almost homogeneous scaling is preserved
(see e.g.~proposition \ref{pr:extens-exist} and \cite{Hormander90,HollandsW02, DuetschF04,Hollands08}).
To profit from these knowledges, one wants to expand $t^{(m)\,0}(x)$ in terms of such distributions $u^0(x)$
(as done in \cite{HollandsW02,DuetschF04}). If $t^{(m)\,0}$ is smooth in $m\geq 0$, this expansion is simply 
the Taylor expansion in $m$ \cite{DuetschF04}:
\be\label{eq:Taylor-m}
t^{(m)\,0}(x)=\sum_{l=0}^L m^l\,u^0_l(x)+\mathfrak{r}^{(m)\,0}_{L+1}(x)\ .
\ee
Choosing $L$ sufficiently large, the remainder $\mathfrak{r}^{(m)\,0}_{L+1}\in \sD'(\bR^k \less \{0\})$
can easily be extended and one is left with the 
almost homogeneous extension of the $u^0_l$-distributions. This procedure maintains the scaling property
\eqref{eq:alm-hom-scal} and it also fulfils the renormalization condition 
$\sd(t^{(m)})=\sd(t^{(m)\,0})$, which is frequently used in causal perturbation theory. 
('$\sd$' means Steinmann's scaling degree \eqref{eq:sd-def},
 which is a measure for the UV-behavior of the distribution.)

If one quantizes the underlying free theory by using a {\it Hadamard function}
(which is smooth in $m\geq 0$), one can require smoothness in $m\geq 0$ as a 
renormalization condition for the time-ordered products and with that one can proceed as 
just described, see \cite{DuetschF04}.

However, mostly the {\it Wightman two-point function} $\Dl^+_m$ is used for the quantization.
In even dimensions $d$, $\Dl^+_m$ is not smooth in $m$ at $m=0$; for $d=4$ it is of the form
\be
\Dl^+_m(x)=\frac{-1}{4\pi^2(x^2 - ix^0 0)}
+ m^2\,f(m^2x^2)\log(-m^2(x^2 - ix^0 0)) + m^2\,F(m^2x^2),
\label{eq:Delta-plus} 
\end{equation}
with $f$ and $F$ being certain analytic functions. To
reduce renormalization to the extension of 
a minimal set of $m$-independent, almost homogeneously scaling distributions
also for time-ordered products based on quantization with
$\Dl^+_m$, we generalize \eqref{eq:Taylor-m} to
\be\label{eq:sm-expansion-0}
t^{(m)\,0}(x) = \sum_{l=0}^L m^l\,\sum_{p=0}^{P_l} \bigl( \log \tfrac{m}{M} \bigr)^p
\,u_{l,p}^0(x)+\mathfrak{r}^{(m)\,0}_{L+1}(x)\ ,\quad L,P_l\in\bN_0 ,
\ee
where $M>0$ is a fixed mass scale and $u_{l,p}^0,\,\mathfrak{r}^{(m)\,0}_{L+1}
\in \sD'(\bR^k \less \{0\})$. We call \eqref{eq:sm-expansion-0} the 'scaling and 
mass expansion'. This name refers to the following two possibilities to
interpret \eqref{eq:sm-expansion-0}: on the one hand it is an expansion in terms of 
$m$-independent, almost homogeneously scaling distributions $u_{l,p}(x)$ and on the other 
hand it is a ``Taylor expansion in the mass $m$ modulo $\log m$''. 

We require the sm-expansion for the $t^{(m)}$-distributions
as a new axiom for causal perturbation theory [sect.~3]. We will construct the general solution of 
the so modified system of axioms [sect.~4].

The sm-expansion
\eqref{eq:sm-expansion-0} is strongly related to the 'scaling expansion' of Hollands 
and Wald for time-ordered products on curved space-times \cite{HollandsW02}. A
main conceptual difference is that we require the structure \eqref{eq:sm-expansion-0}
directly as an axiom, whereas the 'scaling expansion' in \cite{HollandsW02} is a
non-trivial {\it consequence} of the system of axioms used there.

Working with a dimensionally regularized Feynman propagator as introduced in \cite{Keller13},
the sm-expansion \eqref{eq:sm-expansion-0} is of a different form:
$t^{(m)\,0}(x) = \sum_i(\tfrac{m}{M})^{z_i}\,u^0_i(x)$ plus a remainder, where
$z_i\in\bC$ and the $m$-independent distributions $u^0_i\in \sD'(\bR^k \less \{0\})$
scale even {\it homogeneously} with a degree $\kappa_i\in\bC\less\bZ$ [sect.~5].

We assume that the reader is familiar with the formalism for causal perturbation theory
introduced in \cite{DuetschF04}. 

\section{Axioms for causal perturbation theory}
 
\subsection{General axioms}

For simplicity we study a real scalar field $\vf$ on $d$-dimensional Minkowski space $\bM$, $d>2$.
On the space $\sF$ of observables (defined in \cite[formulas (2.1-2)]{DuetschF04}\footnote{Note
that the elements of $\sF$ are polynomials in $(\del^\beta)\vf$ and they are
formal power series in $\hbar$. The generalization to non-polynomial observables is 
given in \cite{BrunettiDF09}.})  
we introduce an $m$-dependent star product $\star_m : \sF \x \sF \to \sF$ \cite{DuetschF01a} by 
\begin{align}\label{eq:general-star}
F \star_m G :=& \sum_{n=0}^\infty \frac{\hbar^n}{n!} 
\int dx_1 \cdots dx_n\,dy_1 \cdots dy_n\,\notag\\ 
&\cdot\frac{\dl^n F}{\dl\vf(x_1)\cdots\dl\vf(x_n)}
\prod_{l=1}^n H_m(x_l - y_l) \,
\frac{\dl^n G}{\dl\vf(y_1)\cdots\dl\vf(y_n)} \,,
\end{align}
where 
\begin{itemize}
\item either $H_m=\Dl^+_m$ is the Wightman two-point function \eqref{eq:Delta-plus},
\item or $H_m=H^\mu_m$ is a {\it Hadamard function}, which depends on an additional mass 
parameter $\mu>0$. In even dimensions $d$, $H_m^\mu$ is related to $\Dl^+_m$ by
\begin{equation}
H_m^{\mu\,(d)}(x) := \Dl^{+\,(d)}_m(x) - m^{d-2}\,f^{(d)}(m^2x^2)\log(m^2/\mu^2)\ ,
\label{eq:Hadam-defn} 
\end{equation}
where $f^{(d)}$ is an analytic function which agrees for $d=4$ with the function $f$
in~\eqref{eq:Delta-plus} (see \cite[Appendix A]{DuetschF04}).
Thus, the $\log(-m^2(x^2 - ix^0 0))$ factor in~\eqref{eq:Delta-plus}
is replaced by $\log(-\mu^2(x^2 - ix^0 0))$, due to that $H^\mu_m$ is 
smooth in $m\geq 0$. 
\end{itemize}
In both cases $H_m$ is a Lorentz invariant 
solution of the Klein-Gordon equation; the antisymmetric part of $H_m$ is fixed by 
$H_m(x)-H_m(-x)=i\,\Dl_m(x)$ (where $\Dl_m$ is the commutator function).

Let $\sP$ be the space of polynomials in $\del^\beta\vf$, for $\beta \in \bN_0^d$.
Following \cite{DuetschF04}, a \textbf{time-ordered product} $T^{(m)}\equiv T=(T_n)_{n\in\bN}$ 
($m$ denotes the mass of the underlying star product) is a sequence of maps 
$T_n\: \sP^{\ox n} \to \sD'(\bM^n,\sF)\ $,\footnote{Note that both the 
arguments and the values of $T_n$
are {\it off-shell fields}, i.e.~not restricted by any field equation.} 
which are \textbf{linear}; and satisfy
\begin{itemize}
\item[(a)] \textbf{Initial value:}
$T_1(A(x)) = A(x)$ for any $A \in \sP$;
\item[(b)] \textbf{Permutation symmetry:}
$T_n(A_{\pi(1)}(x_{\pi(1)}),...,A_{\pi(n)}(x_{\pi(n)}))$ 
\begin{align*}
\qquad\qquad\qquad\qquad= T(A_1(x_1),...,A_n(x_n))\quad\forall\pi \in S_n\ ;
\end{align*}
\item[(c)] \textbf{Causality:} $\quad T_n(A_1(x_1),\dots,A_n(x_n))$
\begin{align}
\qquad\qquad\qquad&= T_k(A_1(x_1),\dots,A_k(x_k)) \star_{m}
T_{n-k}(A_{k+1}(x_{k+1}),\dots, A_n(x_n))
\label{eq:Tprods-causal} 
\\
&\text{whenever}\quad
\{x_1,\dots,x_k\} \cap \bigl( \{x_{k+1},\dots,x_n\} + \ovl V_- \bigr)
= \emptyset.
\nonumber
\end{align}
\end{itemize}
These are the basic axioms. In the inductive step $\{T_1,...,T_{n-1}\}\to T_n$
of the construction of the sequence $T$, these axioms determine 
\be\label{eq:T-0}
T^0_n(A_1(x_1),...):= T_n(A_1(x_1),...)\vert_{\sD(\bM^n \less \Dl_n)}
\ee
\emph{uniquely}, where
$\Dl_n := \set{(x_1,\dots,x_n) \in \bM^n : x_1 = \dots= x_n}$
is the thin diagonal. 

The further axioms (called 'renormalization conditions') restrict only the
\textit{extension} to $\sD'(\bM^n,\sF)$.  
\begin{itemize}
\item[(d)]
\textbf{Field independence}:
\begin{equation}
\fd{}{x} T_n(A_1(x_1),...,A_n(x_n)) = \sum_{l=1}^n
T_n\biggl( A_1(x_1),..., \fd{A_l(x_l)}{x},...,A_n(x_n) \biggr).
\label{eq:field-indep} 
\end{equation}
Using this property in a (finite) Taylor expansion of 
$T_n( A_1(x_1),\dots)$ w.r.t. $\vf=0$, one obtains the \textbf{causal Wick 
expansion}: for \textit{monomials} $A_1,\ldots,A_n\in\sP$ it holds
\begin{align}
T_n\bigl( A_1(x_1),\dots, A_n(x_n) \bigr)
= &\sum_{\unl A_l \subseteq A_l}\! \om_0\bigl( 
T_n\bigl( \unl A_1(x_1),\dots, \unl A_n(x_n) \bigr) \bigr)\,\notag\\
& \cdot\ovl A_1(x_1) \cdots \ovl A_n(x_n),
\label{eq:causal-Wick-expan} 
\end{align} 
where $\om_0:F\mapsto\om_0(F):=F\vert_{\vf=0}$ denotes the vacuum state. In addition,
each \textit{submonomial} $\unl A$ of a given monomial $A$ and its 
\textit{complementary submonomial} $\ovl A$ are defined by
\be\label{subpolynomials}
\unl A
:= \frac{\del^k A}{\del(\del^{\beta_1}\vf)\cdots\del(\del^{\beta_k}\vf)}
\neq 0\ ,  \quad
\ovl A := C_{\beta_1\dots \beta_k} \,\del^{\beta_1}\vf \cdots \del^{\beta_k}\vf
\ee
(no sum over $\beta_1,...,\beta_k$),
where each $C_{\beta_1\dots \beta_k}$ is a certain combinatorial factor and the range of the sum
$\sum_{\unl A \subseteq A}$ are all allowable $k$ and $\beta_1,...,\beta_k$.
(For $k=0$ we have $\unl A=A$ and $\ovl A=1$.)

\item[(e)]
\textbf{Translation invariance}: the $\bC$-valued distributions
\begin{equation}
t^{(m)}(A_1,\dots,A_n)(x_1 - x_n,\dots, x_{n-1} - x_n)
:= \om_0\bigl( T^{(m)}_n\bigl(
A_1(x_1),\ldots, A_n(x_n) \bigr) \bigr)
\label{eq:t} 
\end{equation}
depend {\it only} on the relative coordinates.

\item[(f)]
\textbf{Action Ward Identity (AWI)}:
\be\label{AWI}
\del_{x_k^\mu} T_n\bigl( A_1(x_1),..., A_k(x_k),... \bigr)
=T_n\bigl( A_1(x_1),..., \del_\mu A_k(x_k),... \bigr)\ .
\ee
\end{itemize}

The axioms (d) and (e) simplify the extension $T_n^0(A_1,...)\to T_n(A_1,...)$
to the problem of extending the $\bC$-valued distributions
$t^0(A_1,\dots)(x_1 - x_n,\dots)$ $:= \om_0\bigl( T^0_n(
A_1(x_1),\ldots ) \bigr)\in \sD'(\bR^{d(n-1)} \less \{0\})$ to 
$t(A_1,\dots)(x_1 - x_n,\dots)\in $ $\sD'(\bR^{d(n-1)}),\,\,\forall A_1,...,A_n\in\sP$.

The AWI can be fulfilled by using that 
there exists a subspace $\sP_\bal\subset\sP$ (called 'balanced fields') such that
every $A \in \sP$ can \textit{uniquely} be written as a finite sum
$$
A = \sum_k \del^{\beta_k} B_k\quad\text{where}\quad B_k \in \sP_\bal\ ,\quad\beta_k\in\bN_0^d
$$
(see \cite[Sect.~3.2]{DuetschF04} for the definition of $\sP_\bal$). 
Since $t^0$ fulfills the AWI by induction, one can proceed as follows:
one constructs the extension $t(B_1,...,B_n)$ first only for all balanced fields
$B_1,...,B_n\in\sP_\bal$. Then, using linearity of $T_n$ and writing arbitrary
$A_1,...,A_n\in\sP$ as $A_i = \sum_{k_i} \del^{\beta_{ik_i}} B_{ik_i}$ (where $B_{ik_i} \in \sP_\bal$), 
the definition
\begin{align}\label{eq:AWI-extension}
t(A_1,\ldots,A_n)(x_1-x_n,\ldots)&:=\sum_{k_1,...,k_n}\notag\\
\del^{\beta_{1k_1}}\cdots\del^{\beta_{nk_n}}&
t(B_{1k_1},\ldots,B_{nk_n})(x_1-x_n,\ldots)\ .
\end{align}
yields indeed an extension of $t^0(A_1,\ldots,A_n)$ which satisfies the AWI.
 
\begin{itemize}
\item[(g)]
\textbf{Scaling}: The mass dimension of a field monomial is defined by
\be\label{eq:mass-dimension}
\mathrm{dim}\,\prod_{j=1}^J \partial^{\beta_j}\vf:=J\,\tfrac{d-2}2+\sum_{j=1}^J|\beta_j|\ .
\ee
Let $\sP_\homog$ be the set of ``homogeneous'' polynomials, i.e.~an $A\in\sP_\homog$
is a linear combination of monomials which have the {\it same mass dimension}.

The scaling axioms requires that for $A_1,\dots,A_n \in \sP_\homog$ the numerical 
distributions \eqref{eq:t} scale almost homogeneously under 
$(x,m)\to (\rho x, m/\rho)\ $,\footnote{When quantizing with a Hadamard function 
$H^\mu_m$, the mass parameter $\mu$ is not scaled.} that is
\be\label{eq:scalingprop}
0=(\rho\,\del_\rho)^N\Bigr(\rho^D t^{(m/\rho)} \bigl( 
A_1,\dots, A_n \bigr)(\rho x) \Bigr) 
\ee
for a sufficiently large $N\in\bN$, where the degree $D$ is given by 
$D := \sum_{k=1}^n\dim A_k \in\bN\ .$ 
That $D$ is a natural 
number follows from the observation that $t^{(m)}(A_1,\dots, A_n)$
is non-vanishing only if the number of basic fields
$\del^\beta \vf$ in $\{A_1,\dots,A_n\}$ is even.

By the 'power' of the almost 
homogeneous scaling we mean $N-1$ for the minimal $N\in\bN$ fulfilling
\eqref{eq:scalingprop} (or equivalently \eqref{eq:alm-hom-scal}).

\item[(h)] The axioms \textbf{Lorentz covariance, unitarity, off-shell field equation} and 
\textbf{symmetries} are not relevant for our purposes, hence, we do not explain them here.
\end{itemize}

\subsection{Axioms for quantization with a Hadamard function}

In this subsection we assume that quantization is done by a 
Hadamard function $H^\mu_m$. Then the star product $\star_{m,\mu}$ and, via the causality axiom,
the time-ordered product $T^{(m,\mu)}$ depend on $\mu$.
We complete the system of axioms as follows \cite{DuetschF04}:
\begin{itemize}
\item[(i)]
\textbf{Smoothness in the mass $m \geq 0$}: Since $H^\mu_m$ is smooth in $m\geq 0$,
we may require that the functions
\be\label{eq:smooth-m}
0 \leq m \longmapsto \langle t^{(m,\mu)}(A_1,...,A_n)\,,\,g\rangle
\word{be smooth} \forall A_1,...,A_n\in\sP
\ee
and $\forall g\in\sD(\bR^{d(n-1)})$.
\item[(j)]
\textbf{$\mu$-covariance}: Let
\begin{align}\label{eq:Gamma}
&\Ga := \int dx\,dy\, m^{d-2}\,f^{(d)}(m^2(x - y)^2)\,
\frac{\dl^2}{\dl\vf(x)\,\dl\vf(y)} \word{and}\notag\\
&r^\Ga := 1 + \sum_{k=1}^\infty \frac{1}{k!}\, ((\log r) \Ga)^k,
\end{align}
where $r>0$ and the function~$f^{(d)}$ is the one that appears in the definition
\eqref{eq:Hadam-defn} of the Hadamard function.
With that the operator $(\tfrac{\mu_2}{\mu_1})^\Ga$
intertwines the different star products for $\mu_1$ and $\mu_2$:
\be\label{eq:star-equivalent}
F \star_{m,\mu_2} G
= (\tfrac{\mu_2}{\mu_1})^\Ga \Bigl( ((\tfrac{\mu_2}{\mu_1})^{-\Ga} F)
\star_{m,\mu_1} ((\tfrac{\mu_2}{\mu_1})^{-\Ga} G) \Bigr)\ .
\ee
We require the same relation for the time-ordered products:
\begin{align}\label{eq:mu-cov}
&T_n^{(m,\mu_2)}\bigl(A_1(x_1),...,A_n(x_n)\bigr)=   \notag\\
&( \tfrac{\mu_2}{\mu_1} )^\Ga
\Bigl( T_n^{(m,\mu_1)}\bigl((\tfrac{\mu_2}{\mu_1})^{-\Ga}A_1(x_1),...,
(\tfrac{\mu_2}{\mu_1})^{-\Ga}A_n(x_n)\bigr)\Bigr).
\end{align}
\end{itemize}

\subsection{Modification of the axioms such that the Wightman two-point
function is admitted}

Smoothness in $m \geq 0$, axiom~(i),  excludes the
Wightman two-point function $\Dl^+_m$ in even dimensions $d$. However, a 
time-ordered product $(T^{(m)}_n)_{n\in\bN}$ based on quantization with
$\Dl^+_m$ can be axiomatically defined by using that the operator 
$(\tfrac{\mu}{m})^\Ga$ intertwines the star products $\star_m$ (based on $\Dl^+_m$)
and $\star_{m,\mu}$ (based on $H^\mu_m$). (This statement is 
obtained by inserting $H^m_m=\Dl^+_m$ into \eqref{eq:star-equivalent}.)
Due to that one may
replace axiom (i) by the requirement that the transformed time-ordered product
\be\label{eq:transformedT}
( \tfrac{\mu}{m} )^{\Ga}
\Bigl( T_n^{(m)}\bigl((\tfrac{\mu}{m})^{-\Ga}A_1(x_1),...,
(\tfrac{\mu}{m})^{-\Ga}A_n(x_n)\bigr)\Bigr)
\ee
be  smooth in $m \geq 0$, as done in \cite{DuetschF04,Keller13}. 
(That is, the vacuum expectation values
$t^{(m,\mu)}(A_1,...,A_n)$ $:=\om_0\bigl($\eqref{eq:transformedT}$\bigr)$
fulfil \eqref{eq:smooth-m}.)
In addition, the $\mu$-covariance, axiom (j), is unnecessary, it has to be omitted; 
all other axioms remain unchanged.

Since smoothness in $m\geq 0$ is very helpful for the construction of the 
time-ordered products (by means of the Taylor expansion \eqref{eq:Taylor-m}), the 
obvious way to construct a solution of the so modified
system of axioms is, to construct first the time-ordered product 
$(T^{(m,\mu)}_{n})_{n\in\bN}$ (which is based on $H^\mu_m$), and then
$(T^{(m)}_n)_{n\in\bN}$ is obtained by the inverse transformation of 
\eqref{eq:transformedT}. 

Following essentially \cite{DuetschF04}, we explain why this construction fulfils the axiom
(g) (scaling). First, for $t_H^{(m,\mu)}:=\om_0(T^{(m,\mu)})$ this is obtained as follows:
using causality and the inductive assumption one shows that
$t_H^{(m,\mu)\,0}\in \sD'(\bR^k \less \{0\})$ fulfils 
\eqref{eq:alm-hom-scal}, this is analogous to our procedure in sect.~4.1.
It follows that the pertinent distributions $u_l^0\in \sD'(\bR^k \less \{0\})$
in the Taylor expansion \eqref{eq:Taylor-m} scale almost homogeneously with degree $D-l$.
The extension $u_l^0\to u_l\in\sD'(\bR^k)$ is done such that this property is maintained.
Therefore, inserting $u_l$ into \eqref{eq:Taylor-m}, we obtain that the resulting 
$t_H^{(m,\mu)}\in \sD'(\bR^k)$ fulfils \eqref{eq:alm-hom-scal} (or equivalently 
\eqref{eq:scalingprop}).\footnote{The remainders in the Taylor expansion \eqref{eq:Taylor-m}
are teated in the same way as in our construction in sect.~4.2, hence we neglect them here.}

The second step is to verify that the axiom (g) is preserved in the 
inverse transformation of \eqref{eq:transformedT}:\footnote{In \cite{DuetschF04}
this verification is done in terms of a scaling transformation $\sigma_\rho$, which is an
algebra isomorphism from $(\sF,\star_{(\rho^{-1}m,\rho^{-1}\mu)})$ to $(\sF,\star_{(m,\mu)})$.
To minimize the mathematical tools, we do not introduce $\sigma_\rho$ in this paper.}
we use that $t^{(m)}:=\om_0(T^{(m)})$ can be written as $t^{(m)}=t_H^{(m,m)}$ (due to 
$H^m_m=\Dl^+_m$). With that, the assertion
$$
0=(\rho\,\del_\rho)^N\Bigr(\rho^D t^{(m/\rho)} (A_1,\dots)(\rho x) \Bigr)=
(\rho\,\del_\rho)^N\Bigr(\rho^D 
t_H^{(m/\rho,m/\rho)} ( A_1,\dots)(\rho x) \Bigr)
$$
can equivalently be written as
\be\label{eq:scal-t_H}
0=\Bigl(x\,\del_x-m\del_m+D-\mu\,\del_\mu\Bigr)^N\,t_H^{(m,\mu)}(A_1,\ldots)(x)\vert_{\mu=m}\ ,
\ee
where $x\,\del_x:=\sum_r x_r\,\del_{x_r}$. From \eqref{eq:mu-cov} we see that
\be
(\mu\,\del_\mu)^K t_H^{(m,\mu)}(A_1,\ldots)=(\mu\,\del_\mu)^K\om_0\Bigl(
( \tfrac{\mu}{\mu_0} )^\Ga\circ
T^{(m,\mu_0)}\bigl((\tfrac{\mu}{\mu_0})^{-\Ga}A_1,...\bigr)\Bigr)=0
\ee
for $K\in\bN$ sufficiently large; namely, since our functionals 
$F\in\sF$ are {\it polynomials} in $(\del^\beta)\vf$, an expression
$r^\Gamma F$ \eqref{eq:Gamma} is a polynomial in $\log r$.
Now writing the r.h.s.~of \eqref{eq:scal-t_H} as
$$
\sum_{K=0}^{N} \binom{N}{K}
(x\,\del_x-m\,\del_m + D)^{N-K} (-\mu\,\del_\mu)^{K}\, 
t_H^{(m,\mu)}(A_1,\ldots)(x)\vert_{\mu=m}\ ,
$$
we see that this expression vanishes indeed for $N\in\bN$ sufficiently large.

\begin{ex}
We illustrate for the setting sun diagram in $d=4$ dimensions how
$t^{(m)}(\vf^3,\vf^3)$ (based on $\Dl^+_m$) can be obtained from
$T^{(m,\mu)}$-terms in practice. From  \eqref{eq:Hadam-defn} we know that the
Feynman(-like) propagators fulfil
$$
\Dl^F_m(x)=H^{F,\mu}_m(x)+d^\mu_m(x)\quad\text{with}\quad d^\mu_m\in C^\infty\ ,
$$
where $H^{F,\mu}_m(x):=\theta(x^0)H^\mu_m(x)+\theta(-x^0)H^\mu_m(-x)$. Inserting this into 
\newline
$t^{(m)}(\vf^3,\vf^3)(x)=6\,\hbar^3\,(\Dl^F_m(x))^3$ we obtain
\begin{align*}
t^{(m)}(\vf^3,\vf^3)(x)=&
t^{(m,\mu)}(\vf^3,\vf^3)(x)+9\,\hbar\,t^{(m,\mu)}(\vf^2,\vf^2)(x)\,\,d^\mu_m(x)\notag\\
&+18\,\hbar^3\,H^{F,\mu}_m(x)\,(d^\mu_m(x))^2+6\,\hbar^3\,(d^\mu_m(x))^3\ .
\end{align*}
Since $d^\mu_m$ is smooth, all appearing pointwise products exist.
\end{ex}

However, in view of a {\it direct} construction of $(T^{(m)}_n)_{n\in\bN}$, we are 
searching a direct axiomatic definition of these objects.
We want to keep almost homogeneous scaling (with degree $D$) of the
distributions $t\equiv t^{(m)}(A_1,\dots,A_n)$, $A_1,\dots A_n\in\sP_\mathrm{hom}$, 
see \eqref{eq:scalingprop}. This axiom admits the addition of a term
\be\label{eq:nonunique-scaling}
t(x_1 - x_n,\dots,x_{n-1} - x_n) + \sum_{|\beta|+l = D-d(n-1)} \!
m^l\,C^{(m)}_{l,\beta} \,\del^\beta\dl(x_1 - x_n,\dots,x_{n-1} - x_n),
\ee
where $l \in \bZ$ (since $D \in \bN$) and the numbers $C^{(m)}_{l,\beta}\in\bC$ are,
as functions of $m$, 
polynomials in $\log(m/M)$, where $M>0$ is some renormalization mass
scale. But to fulfil the usual requirement $\sd(t) = \sd(t^0)$ on
extensions $t$ of~$t^0$, we need a substitute for smoothness in
$m\geq 0$, which excludes negative values of~$l$. Such a candidate is:
\begin{itemize}
\item[(i$'$)]
\textbf{Continuity in the mass $m \geq 0$}:
We require that the functions
\be\label{eq:continuous-m}
0 \leq m \longmapsto \langle t^{(m)}(A_1,...,A_n)\,,\,g\rangle
\word{be continuous} \forall A_1,...,A_n\in\sP
\ee
and $\forall g\in\sD(\bR^{d(n-1)})$.
\end{itemize}

With that, the Wightman two-point function $\Dl^+_m$ is admitted also
in even dimensions~$d$. (Recall that $\Dl^+_m$ is actually $C^1$ in
$m \geq 0$.)

\begin{rem}[central solution and mass-shell renormalization] 
If all fields are massive (i.e., $m > 0$), any admissible extension
$t^{(m)} \in \sD'(\bR^k)$ of a given $t^{(m)\,0} \in \sD'(\bR^k \less \{0\})$
has the property that its Fourier transformed\footnote{Fourier transformation
is meant w.r.t.~the relative coordinates $x\equiv(x_1-x_n,...,x_{n-1}-x_n)$.}
distribution $\hat t^{(m)}(p)$ is analytic in a neighbourhood of $p=0$ 
(see \cite{EpsteinG73}). Therefore, the so-called ``central solution
$t_c^{(m)}$'' of the extension problem exists, which is 
defined by
$$
\del^\beta \hat t_c^{(m)}(0)=0\quad\forall |\beta|\leq\om\ ,\quad
\om :=\sd(t^{(m)\,0})-k\ .
$$ 
It can be obtained from any extension $t^{(m)}$ with 
$\sd(t^{(m)})=\sd(t^{(m)\,0})$, by Taylor subtraction:
$$
\hat t_c^{(m)}(p)=\hat t^{(m)}(p)-\sum_{|\beta|\leq \om}\frac{p^\beta}{\beta!}
\,\del^\beta\hat t^{(m)}(0)\ ,
$$
which corresponds to ``BPHZ-subtraction at $p=0$". We conclude: if
there exists an extension $t^{(m)}$ which fulfills the scaling axiom
\eqref{eq:scalingprop} with degree $D=\om+k\in\bN$ and power $(N-1)$, i.e.
$$
(\rho\del_\rho)^N\,\Bigl(\rho^{\om}\,\hat t^{(m/\rho)}(p/\rho)\Bigr)=0\ ,
$$
then, this holds also for $t_c^{(m)}$ with the same degree 
and the same power. But, it is well known that the limit 
$\lim_{m\downto 0} t_c^{(m)}$ diverges in general,%
\footnote{This holds e.g.~for the fish diagram in $d = 4$
dimensions.}
i.e.~the central solution is in conflict with continuity in $m \geq 0$
and, hence, also with the sm-expansion 
axiom (which is treated in the following sections).

To discuss mass-shell renormalization we study a $\vf^4$-interaction
in $d=4$ dimensions (or $\vf^3$ in $d=6$). Let 
$$
\Sigma^m_n(p^2):=\hat t^{(m)}(\vf^3,\vf^4,...,\vf^4,\vf^3)(p,0,...,0)
$$
(or the same for $\hat t(\vf^5,\vf^6,...,\vf^6,\vf^5)$ in the ($d=6$)-case)
be the self-energy contribution to $n$-th order; it has $\om=2$ to all orders. 
The inner momenta $p_j$ ($j=2,...,n-1$) are set to $p_j=0$, due to integrating out the 
inner vertices $x_j$ with $g(x_j)\equiv 1$ (``partial adiabatic limit'', see
e.g.~\cite{Duetsch97}). We use the notation
$\Sigma^{m\,\prime}_n(p^2):=\tfrac{\del}{\del p^2}\Sigma^m_n(p^2)$. In addition,
let $m_0$ be the physical mass. The  mass-shell renormalization
 $\Sigma^m_{n,m_0}(p^2)$ is uniquely defined by 
$$
\Sigma^m_{n,m_0}(m_0^2)=0\quad\text{and}\quad \Sigma^{m\,\prime}_{n,m_0}(m_0^2)=0\ ,
\quad\forall n\geq 2\ ,
$$
and is obtained by Taylor subtraction (``BPHZ-subtraction at $p^2=m_0^2$"):
$$
\Sigma^m_{n,m_0}(p^2)=\Sigma^m_n(p^2)-\Sigma^m_n(m_0^2)
-(p^2-m_0^2)\,\Sigma^{m\,\prime}_n(m_0^2)\ .
$$
If $\Sigma^m_n(p^2)$ scales almost homogeneously with power $N_n-1$,
i.e. 
$$
(\rho\del_\rho)^{N_n}\,\bigl(\rho^2\,\Sigma^{m/\rho}_n(p/\rho)\bigr)=0\ ,
\,\,\,\text{then generally}\,\,\, 
(\rho\del_\rho)^N\bigl(\rho^2\,\Sigma^{m/\rho}_{n,m_0}
(p/\rho)\bigr)\not=0
$$ 
$\forall N\in\bN$, because the subtraction point $m_0^2$ is not 
scaled. However, usually one sets $m:=m_0$ (``mass renormalization'') and, if
$m_0$ is also scaled, we obtain
$$
(\rho\del_\rho)^{N_n}\,\Bigl(\rho^2\,\Sigma^{m_0/\rho}_{n,m_0/\rho}
(p/\rho)\bigr)=0\ .
$$
But, the limit $\lim_{m_0\downto 0}\Sigma^{m_0}_{n,m_0}$ diverges in general,
because the central solution $\Sigma^m_{n,0}(p^2)=\hat t^{(m)}_c(p,0,...,0)$ 
generally does not exist for $m=0$.
\end{rem}

\section{The scaling and mass expansion}

The difficult question is: how to construct a solution of the
just proposed system of axioms (a)-(h) and (i$'$)? 
We solve the problem in an indirect way, by 
replacing the almost homogeneous scaling, axiom (g), and the continuity in $m \geq 0$, axiom~(i$'$),
by the following new axiom:

\begin{itemize}
\item[(k)]
\textbf{Scaling and mass expansion}:
For all field {\it monomials} $A_1,\dots,A_n\in \sP$, the vacuum expectation values
$t^{(m)}(A_1,\dots,A_n)(x_1 - x_n,\dots, x_{n-1} - x_n)$ \eqref{eq:t}
fulfil the sm-expansion with degree  
$D := \sum_{k=1}^n \dim A_k$, where the following definition is used:

\begin{defn} 
\label{df:sm-expansion}
A distribution $f^{(m)}\in\sD'(\bR^k)$ or $f^{(m)}\in
\sD'(\bR^k\setminus\{0\})$, depending on $m\geq 0$,
 fulfils the sm-expansion 
with degree $D$, if for all $l,L\in\bN_0$ there exist distributions
$u^{(m)}_l ,\,\,\mathfrak{r}^{(m)}_{L+1}\in\sD'(\bR^k[\setminus\{0\}])$ such that
\be\label{eq:sm-expansion}
f^{(m)}(x) = \sum_{l=0}^L m^l\,u^{(m)}_l(x)+\mathfrak{r}^{(m)}_{L+1}(x)\quad
\forall L\in\bN_0\ ,
\ee
and
\begin{enumerate}
\item[(A)]
$u_0 \equiv u_0^{(m)}$ is independent of~$m$ and $u_0=f^{(0)}$;
\item[(B)] For $l\geq 1$ the $m$-dependence of
$u^{(m)}_l(x)$ is a polynomial in $ \log \tfrac{m}{M}$, 
where $M>0$ is a fixed mass scale. Explicitly, there exist $m$-independent
distributions $u_{l,p} \in \sD'(\bR^k [\less \{0\}])$ such that
\begin{equation}
u^{(m)}_l(x) = \sum_{p=0}^{P_l} \bigl( \log \tfrac{m}{M} \bigr)^p
\,u_{l,p}(x), \quad P_l < \infty\ .
\label{eq:log(m)-expansion} 
\end{equation}
(Of course, the distributions $u_{l,p}$ depend on $M$.)
\item[(C)]
$u^{(m)}_l(x)$ scales almost homogeneously in~$x$ with degree $D - l$ and,
hence, this holds also for all $u_{l,p}$ \eqref{eq:log(m)-expansion}; 
\item[(D)]
$\mathfrak{r}^{(m)}_{L+1}(x)$ is almost homogeneous with degree $D$ under
the scaling $(x,m)\mapsto (\rho x, m/\rho)$;
\item[(E)] $\mathfrak{r}^{(m)}_{L+1}$ is smooth in $m$ for $m>0$ and
$$
\lim_{m\downarrow 0}\,(\tfrac{m}{M})^{-(L+1)+\eps}\,\,\mathfrak{r}^{(m)}_{L+1}=0\quad\quad\forall\eps >0\ .
$$
\end{enumerate}
(All properties are meant in the weak sense, e.g.~(E) holds for
$\langle \mathfrak{r}^{(m)}_{L+1},h\rangle\quad$ 
$\forall h\in \sD(\bR^k[\less \{0\}])$.)
\end{defn}
\end{itemize}

As explained after \eqref{eq:scalingprop}, the degree $D=\sum_k\dim A_k$ is a natural number.

One easily verifies that, in $d=4$ dimensions, the Wightman two-point function
$\Dl^+_m$ \eqref{eq:Delta-plus} fulfils the sm-expansion with degree $D=2$.
For arbitrary $d\geq 3$, $\Dl^{+\,(d)}_m$ fulfils the sm-expansion with 
degree $D=d-2$. (If $d$ is odd, $\Dl^{+\,(d)}_m$ is smooth in $m\geq 0$, hence 
the sm-expansion is simply the Taylor expansion.)
Taking additionally $\dim\vf=\tfrac{d-2}2$ into account, we find  that
$$
t^{(m)}(\vf(x_1),\vf(x_2)) = \hbar\,\Dl^F_m(y)
= \hbar\,\bigl(\th(y^0)\Dl^+_m(y)+\th(-y^0)\Dl^+_m(-y)\bigr) 
$$
(where $y \equiv x_1 - x_2$) fulfils the new axiom (k).

The following lemma gives basic properties of distributions fulfilling
the sm-expansion.

\begin{lem} 
\label{lm:sm-expans-properties}
We assume that $f^{(m)}\in\sD'(\bR^k[\setminus\{0\}]),\,
f^{(m)}_1\in\sD'(\bR^{pd}[\setminus\{0\}])$ and $f^{(m)}_2\in\sD'(\bR^{qd}[\setminus\{0\}])$ satisfy the 
definition~\ref{df:sm-expansion} with degree $D$, $D_1$ or $D_2$, 
respectively. Then the following statements hold true:
\begin{enumerate}
\item[(1)] $f^{(m)}$ is smooth in $m$ for $m>0$ and
$\lim_{m\downto 0}\,f^{(m)} = u_0=f^{(0)}\ $.
\item[(2)]
$f^{(m)}(x)$ is almost homogeneous with degree $D$ under
the scaling $(x,m)\mapsto (\rho x, m/\rho)$.
\item[(3)]
$\partial^\beta_x\, f^{(m)}(x)$ (where $\beta$ is a multi-index)
fulfils the sm-expansion with degree $D+|\beta|$. 
\item[(4)]
We assume that the product of
distributions $f^{(m)}_1(x) f^{(m)}_2(y)$, which may be a 
(partly) pointwise product\footnote{More precisely: let $(x_1,...,x_p)$ and
$(y_1,...,y_q)$ (where $x_i,y_j\in\bR^d$) be the linearly independent components of $x\in\bR^{pd}$
and $y\in\bR^{qd}$, respectively. Then, the set $\{x_1,...,x_p,y_1,...,y_q\}$ may be linearly dependent.}, 
exists. Then, $f^{(m)}_1(x)\, f^{(m)}_2(y)$
fulfils also the sm-expansion with degree
$D = D_1 + D_2$.
\item[(5)] The sm-expansion is {\it unique}, i.e.~if
we know that a given $f^{(m)}$ has such an expansion, then the ``coefficients''
$u^{(m)}_l$ (and, hence, also the ``remainders'' $\mathfrak{r}^{(m)}_{L+1}$) are uniquely determined.
\item[(6)] The scaling degree of the remainder is bounded by $\,\,\sd(\mathfrak{r}^{(m)}_{L+1})\leq D-(L+1)$.
\end{enumerate}
\end{lem}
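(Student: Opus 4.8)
The plan is to treat the six claims in the order (1),(2),(3),(5),(4),(6), reusing each part as it is established. Throughout I write $E:=\sum_r x_r\del_r$ for the spatial Euler operator and note that the joint scaling $(x,m)\mapsto(\rho x,m/\rho)$ is generated by $E-m\del_m+D$. For (1) and (2) I would take the expansion \eqref{eq:sm-expansion} at $L=0$, i.e.\ $f^{(m)}=u_0+\mathfrak{r}^{(m)}_1$. Smoothness of $f^{(m)}$ for $m>0$ is immediate, since $u_0$ is $m$-independent and $\mathfrak{r}^{(m)}_1$ is smooth by (E); the limit $\lim_{m\downto0}f^{(m)}=u_0=f^{(0)}$ follows because (E) with any $\eps\in(0,1)$ forces $\mathfrak{r}^{(m)}_1\to0$. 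For (2), property (C) gives $(E+D)^{N_0}u_0=0$, and since $u_0$ is $m$-independent this already reads $(E-m\del_m+D)^{N_0}u_0=0$; combined with the joint almost-homogeneity of $\mathfrak{r}^{(m)}_1$ from (D), this shows $f^{(m)}$ is jointly almost-homogeneous of degree $D$.

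For (3) I would differentiate the expansion for each $L$ and check (A)--(E) term by term, the only inputs being the commutator $[E,\del^\beta]=-|\beta|\,\del^\beta$ (so $\del^\beta$ raises an $x$-degree by $|\beta|$ and commutes with $m\del_m$) and the continuity of $\del^\beta$ on $\sD'$ (so it commutes with the mass limit in (E)). Thus $\del^\beta u_l^{(m)}$ is again a $\log\tfrac{m}{M}$-polynomial scaling with degree $(D+|\beta|)-l$, and $\del^\beta\mathfrak{r}^{(m)}_{L+1}$ inherits (D),(E) with degree $D+|\beta|$. For uniqueness (5) I would exploit the asymptotic hierarchy of the functions $m\mapsto m^l(\log\tfrac{m}{M})^p$: fixing a common $L$ and subtracting two expansions gives $\sum_{l\le L}m^l\sum_p(\log\tfrac{m}{M})^p(u_{l,p}-\tilde u_{l,p})=\tilde{\mathfrak{r}}^{(m)}_{L+1}-\mathfrak{r}^{(m)}_{L+1}$, whose right-hand side is $o((m/M)^{L+1-\eps})$ by (E). Since every monomial $m^l(\log\tfrac{m}{M})^p$ with $l\le L$ dominates $m^{L+1-\eps}$ as $m\downto0$, peeling off the leading term (smallest $l$, then largest $p$) and dividing forces each coefficient difference to vanish; hence the $u^{(m)}_l$, and then the $\mathfrak{r}^{(m)}_{L+1}$, are determined.

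For the product (4) I would multiply the two expansions, collect $U^{(m)}_l:=\sum_{l_1+l_2=l}u^{(m)}_{1,l_1}(x)\,u^{(m)}_{2,l_2}(y)$ as the coefficient of $m^l$ for $l\le L$ (truncating at orders $L_1,L_2\ge L$) and put everything else into $R^{(m)}_{L+1}$. Properties (A),(B) are clear. For (C) I would use the Leibniz identity $(E+\kappa_1+\kappa_2)(gh)=((E_1+\kappa_1)g)\,h+g\,((E_2+\kappa_2)h)$, where $E_1,E_2$ are the Euler operators in the two sets of variables: since $(E_1+\kappa_1)$ and $(E_2+\kappa_2)$ act on different factors, if they annihilate $g$ resp.\ $h$ after $N_1$ resp.\ $N_2$ applications, then $(E+\kappa_1+\kappa_2)^{N_1+N_2-1}(gh)=0$; with $\kappa_i=D_i-l_i$ this makes $U^{(m)}_l$ scale with degree $(D_1+D_2)-l$. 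The same commuting-nilpotent fact, now for $E-m\del_m+D_1+D_2$ (whose $m\del_m$-part also obeys Leibniz because $m$ is shared), shows $f^{(m)}_1f^{(m)}_2$ is jointly almost-homogeneous of degree $D_1+D_2$; and each $m^lU^{(m)}_l$ is as well (again a difference of commuting nilpotents, the spatial part from (C) and $m\del_m$ acting on the $\log\tfrac{m}{M}$-polynomial), so their difference $R^{(m)}_{L+1}$ is too, giving (D). Property (E) is then checked by estimating the finitely many constituents of $R^{(m)}_{L+1}$ (the high-order $u$--$u$ products with $l_1+l_2\ge L+1$, the cross terms carrying one remainder, and $\mathfrak{r}_1\mathfrak{r}_2$) as $m\downto0$, where $L_1,L_2\ge L$ forces every exponent to exceed $L+1$. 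The point needing care is that the pointwise products of the individual pieces exist at all: I would argue that each $u_{i,l}$ and $\mathfrak{r}_{i,\cdot}$ is obtained from $f_i$ by operations in the mass variable only (the limit of (1) and finite combinations of mass-rescalings, cf.\ (5) and the step below), which do not enlarge the $x$-wavefront set, so Hörmander's criterion guaranteeing existence of $f^{(m)}_1f^{(m)}_2$ descends to each product of pieces.

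The main obstacle is (6), the bound $\sd(\mathfrak{r}^{(m)}_{L+1})\le D-(L+1)$, because expressing the $x$-scaling at fixed $m$ through the joint homogeneity seems to force control of mass-derivatives of the remainder, which (E) does not directly provide. I would circumvent this as follows. Property (D) gives, with power $N-1$, the exact relation $\rho^D\mathfrak{r}^{(m/\rho)}_{L+1}(\rho x)=\sum_{j=0}^{N-1}(\log\rho)^j\,c^{(m)}_j(x)$ with $c_0=\mathfrak{r}^{(m)}_{L+1}$. Rather than extract the companions $c_j$ by $m$-differentiation, I would fix $N$ distinct values $\rho_1,\dots,\rho_N>0$ and invert the resulting Vandermonde system in $(\log\rho_i)^j$, writing each $c^{(m)}_j$ as a finite linear combination of the rescaled distributions $\rho_i^D\mathfrak{r}^{(m/\rho_i)}_{L+1}(\rho_i x)$. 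By (E) every such term is $o((m/M)^{L+1-\eps})$ for all $\eps>0$, so $c^{(m)}_j=o((m/M)^{L+1-\eps})$ with no derivatives involved. Finally, setting $\rho=\lambda$ and base mass $m\lambda$ in the same relation yields $\mathfrak{r}^{(m)}_{L+1}(\lambda x)=\lambda^{-D}\sum_j(\log\lambda)^j\,c^{(m\lambda)}_j(x)$, whence $\lambda^s\mathfrak{r}^{(m)}_{L+1}(\lambda x)=o(\lambda^{\,s-D+L+1-\eps})$; for every $s>D-(L+1)$ one picks $\eps$ small and lets $\lambda\downto0$ to conclude $\sd(\mathfrak{r}^{(m)}_{L+1})\le D-(L+1)$.
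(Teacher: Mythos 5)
Your proposal is correct and follows essentially the same route as the paper for all six parts: (1)--(4) by termwise verification of (A)--(E) using the commuting-operator/Leibniz argument for the scaling properties, (5) via the asymptotic hierarchy of the functions $m^l(\log\tfrac{m}{M})^p$, and (6) by extracting the $\log\rho$-companions of $\mathfrak{r}^{(m)}_{L+1}$ from the joint scaling relation (D), bounding them through (E), and then rescaling in $x$. The only notable difference is presentational: your Vandermonde inversion in (6) makes explicit a step the paper carries out implicitly by multiplying the scaling relation by $(\rho m)^{\varepsilon}$ and letting $m\downarrow 0$, and your wavefront-set remark on the existence of the partial products in (4) addresses a point the paper leaves tacit.
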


\begin{proof}
{\it Part (1)} follows immediately from \eqref{eq:sm-expansion}
and properties (A),(B) and (E). 

{\it Part (2)}: we have 
to show that $m^l\,u_l^{(m)}(x)$ has the asserted 
scaling property. This can be done as follows:
\begin{align*}
& (x\,\del_x + D - m\,\del_m)^{N}\, m^l\,u_l^{(m)}(x)
= m^l\,(x\,\del_x + (D - l) - m\,\del_m)^{N}\, u_l^{(m)}(x)
\\
&\quad = m^l\ \sum_{k=0}^{N} \binom{N}{k}
(x\,\del_x + D - l)^k (-m\,\del_m)^{N-k}\, u_l^{(m)}(x)\ ,
\end{align*}
where $x\,\del_x:=\sum_{i=1}^k x_i\del_{x_i}$.
Now, choosing $N$ sufficiently large, at least one of the operators 
$(x\,\del_x + D - l)^k$ or $(-m\,\del_m)^{N-k}$ yields zero when applied 
to $u_l^{(m)}(x)$, due to properties (C) and (B), respectively.

{\it Part (3)}: we show that $\del^\beta_x u_l^{(m)}(x)$ and  
 $\del^\beta_x \mathfrak{r}^{(m)}_{L+1}(x)$ satisfy the properties (A)-(E) with degree
$D+|\beta|$. To verify (D) let $N\in\bN$ be such that 
$(x\,\del_x + D - m\,\del_m)^N\,\mathfrak{r}^{(m)}_{L+1}(x)=0$. It follows that
$$
0=\del^\beta_x\,(x\,\del_x + D - m\,\del_m)^N\,\mathfrak{r}^{(m)}_{L+1}(x)
=(x\,\del_x + D+|\beta| - m\,\del_m)^N\,\del^\beta_x \,\mathfrak{r}^{(m)}_{L+1}(x).
$$
(C) can be shown analogously. To verify (A), (B) and (E) we use that these properties 
hold for $\langle g^{(m)},h\rangle$, where $g^{(m)}=u^{(m)}_l$ or
$g^{(m)}= \mathfrak{r}^{(m)}_{L+1}$, for all $h\in \sD(\bR^k[\less \{0\}])$. 
Hence, they hold for $(-1)^{|\beta|}\,\langle g^{(m)},\del^\beta h\rangle
=\langle \del^\beta g^{(m)},h\rangle\,\,\forall h$.

{\it Part (4)}: by a straightforward calculation we obtain
$$
f^{(m)}_1(x)\, f^{(m)}_2(y)
= \sum_{l=0}^L m^l\, u^{(m)}_l(x,y)+\mathfrak{r}^{(m)}_{L+1}(x,y)\ ,
$$
where
\begin{align*}
u^{(m)}_l(x,y)&:= \sum_{k=0}^l u^{(m)}_{1,k}(x)\,u^{(m)}_{2,l-k}(y),\quad (0\leq l\leq L)\\
\mathfrak{r}^{(m)}_{L+1}(x,y)&:=\mathfrak{r}^{(m)}_{1,L+1}(x)\,\mathfrak{r}^{(m)}_{2,L+1}(y)+
\mathfrak{r}^{(m)}_{1,L+1}(x)\,\sum_{l=0}^L m^l\,u_{2,l}^{(m)}(y)\\
&+\bigl(\sum_{l=0}^L m^l\,u_{1,l}^{(m)}(x)\bigr)\,\mathfrak{r}^{(m)}_{2,L+1}(y)
+\sum_{l=L+1}^{2L}m^l \sum_{k=l-L}^L  u^{(m)}_{1,k}(x)\,u^{(m)}_{2,l-k}(y)\ .
\end{align*}
With that, it is an easy task to verify that $u^{(m)}_l(x,y)$ and
$\mathfrak{r}^{(m)}_{L+1}(x,y)$ satisfy the properties (A)-(E) with degree $D=D_1+D_2$,
by using that $u^{(m)}_{j,l}$ and
$\mathfrak{r}^{(m)}_{j,L+1}$ fulfil these properties with degree $D_j$ (where $j=1,2$).

{\it Part (5)}: the determination of $u_0$ is given in part (1). For $l\geq 1$
we assume that $u^{(m)}_k$ is known for $k<l$ and we determine the coefficients
$u_{l,p}$ of $u^{(m)}_l$ \eqref{eq:log(m)-expansion} as follows: for $\bN\ni P>P_l$ 
the limit
\be\label{eq:m-limit}
\lim_{m\downto 0} \,\bigl( \log \tfrac{m}{M} \bigr)^{-P}\,m^{-l}\,\Bigl(
f^{(m)}(x)-\sum_{k=0}^{l-1} m^k\,u_k^{(m)}(x)\Bigr)
\ee
gives zero, for $P=P_l$ it gives $u_{l,P_l}$ and for $P<P_l$ it diverges. 
Since $P_l$ is unknown, we start with a $P$ which is sufficiently high
that the limit exists, if it vanishes we lower $P$ by $1$ etc.. Having determined
$P_l$ and $u_{l,P_l}$ in this way, we compute 
\begin{align*}
&\lim_{m\downto 0} \,\bigl( \log \tfrac{m}{M} \bigr)^{-(P_l-1)}\,m^{-l}\,
\Bigl(f^{(m)}(x)-\sum_{k=0}^{l-1} m^k\,u_k^{(m)}(x)-m^l\,
\bigl( \log \tfrac{m}{M} \bigr)^{P_l}\,u_{l,P_l}(x)\Bigr)\notag\\
&=u_{l,P_l-1}\ ;
\end{align*}
and so on.

{\it Part (6)}: from property (E) we know that the distribution
$$
t^{(m)}(x):=m^{-(L+1)}\,\mathfrak{r}^{(m)}_{L+1}(x)\quad\text{fulfils}\quad
\lim_{m\downto 0}\,(\tfrac{m}{M})^\eps\,t^{(m)}=0\quad\quad\forall\eps>0\ .
$$
From (D) we conclude that 
\be
\rho^{D-(L+1)}\,t^{(m)}(\rho x)=t^{(\rho m)}(x)+
\sum_{k=1}^N l_k^{(\rho m)}(x)\,(\log\rho)^k \quad\quad\forall\rho >0
\ee
with some $l_k^{(m)}\in\sD^\prime(\bR^k[\less\{0\}])$. Multiplying the latter equation
by $(\rho m)^\eps$ and performing the limit $m\downto 0$, we conclude that 
$$
\lim_{m\downto 0}(\tfrac{m}{M})^\eps\,l_k^{(m)}=0\quad\quad\forall\eps>0\ ,\,\,\,k=1,...,N\ .
$$
It follows that
\begin{align*}
\lim_{\rho\downto 0}\rho^{D-(L+1)+\eps}\,\mathfrak{r}^{(m)}_{L+1}(\rho x)& =m^{L+1}\Bigl(
\lim_{\rho\downto 0}\rho^\eps\,t^{(\rho m)}(x)\\
+\sum_{k=1}^N& \bigl(\lim_{\rho\downto 0}\rho^{\eps/2}\,l_k^{(\rho m)}(x)\bigr)\,
\bigl(\lim_{\rho\downto 0}\rho^{\eps/2}\,(\log\rho)^k\bigr)\Bigr)=0\quad\forall\eps >0\ .
\end{align*}
\end{proof}

From parts {\it (1)} and  {\it (2)} we see that
the new axiom~(k), sm-expansion, is \textit{sufficient} for the 
above proposed axioms (i$'$), continuity in $m \geq 0$, and 
(g), almost homogeneous scaling. We will see that~(k) is even
\emph{equivalent} to the combination of~(i$'$) and (g), in the sense that the set of
solutions of the axioms (a)-(f), (h) and (k) is \emph{equal} to the set of solutions of
(a)-(h) and~(i$'$).

\section{Construction of a solution of the new system of axioms} 

In this section we use the inductive Epstein-Glaser construction \cite{EpsteinG73}, to obtain the 
general solution of the system of axioms (a)-(f), (h) and (k). More precisely we work with
Stora's extension of distributions \cite{Stora82,BrunettiF00} instead of Epstein and Glaser's
distribution splitting method.

\subsection{Inductive step, off the thin diagonal} 

We use that $T_n^0(A_1(x_1),...)\in\sD'(\bM^n\less\Dl_n,\sF)$ \eqref{eq:T-0}
is uniquely determined by causal factorization \eqref{eq:Tprods-causal}, see \cite{BrunettiF00}. 
Due to the uniqueness of the sm-expansion, we only have to show that
for every configuration $(x_1,...,x_n)\in\bM^n\setminus\Dl_n$ there {\bf exists} such an expansion; in
particular, the resulting expansion does not depend on the way we split $\{x_1,...,x_n\}$ into two
nonempty subsets such that one is later than the other.

Without restricting generality, we may assume that 
$\{x_1,...,x_l\}\cap $\newline $(\{x_{l+1},...,x_n\}+\bar V_-)=\emptyset\ $, 
in addition let $A_1,...,A_n$ be field {\it monomials}. 
Inserting the causal Wick expansion \eqref{eq:causal-Wick-expan} into \eqref{eq:Tprods-causal}, we see
that $t^0(A_1,...,A_n)(x_1-x_n,...):=\om_0\bigl(T^0_n(A_1(x_1),...)\bigr)$
is a linear combination of products
\begin{align}\label{eq:j-wick-exp}
&t(\unl A_1,...,\unl A_l)(x_1-x_l,...)\,
t(\unl A_{l+1},...,\unl A_n)(x_{l+1}-x_n,...)\notag\\
&\quad \quad \cdot \om_0\Bigl(\bigl(\ovl A_1(x_1)\cdots\ovl A_l(x_l)\bigr)\star_m
\bigl(\ovl A_{l+1}(x_{l+1})\cdots\ovl A_n(x_n)\bigr)\Bigr)\ .
\end{align}
The $\om_0(...)$-factor is, if it does not vanish, a linear combination of products
\be
\prod_{k=1}^K\del^{\beta_k} \Dl^+_m(x_{i_k}-x_{j_k})\quad\text{with}\quad
K(d-2)+\sum_{k=1}^K |\beta_k|=\sum_{i=1}^n\mathrm{dim}\,\ovl A_i\ ,
\ee
where $i_k\in \{1,...,l\}$ and $j_k\in \{l+1,...,n\}$.
By induction $t(\unl A_1,\unl A_l)$ and 
$t(\unl A_{l+1},\unl A_n)$ fulfil the sm-expansion with 
degree $D_{(i)}:=\sum_{i=1}^l\dim \unl A_i$ and 
$D_{(ii)}:=\sum_{j=l+1}^n\dim \unl A_j$, respectively; in addition
$\del^{\beta_k} \Dl^+_m$ satisfies this expansion with degree $D_k:=d-2+|\beta_k|$
(due to part {\it (3)} of the lemma). By means of part {\it (4)} of the lemma, we conclude that 
\eqref{eq:j-wick-exp} fulfils the sm-expansion with degree
$$
D_{(i)}+D_{(ii)}+\sum_{k=1}^K D_k=\sum_{i=1}^n\dim A_i\ ,
$$
where we use that $\dim\unl A+\dim\ovl A=\dim A$ (which follows immediately
from \eqref{subpolynomials}).
Hence, $T_n^0$ fulfils the new axiom~(k).

\subsection{Extension to the thin diagonal} 

To maintain the sm-expansion of $t_n^{(m)\,0}\in\sD'(\bR^{d(n-1)} \less \{0\})$,
\begin{align}\label{eq:s-expansion}
t_n^{(m)\,0}(x)& = u_0^0(x)+\sum_{l=1}^L m^l\, 
\sum_{p=0}^{P_l} \bigl( \log \tfrac{m}{M} \bigr)^p\,u^0_{l,p}(x)
+\mathfrak{r}_{L+1}^{(m)\,0}(x)\ ,
\end{align} 
we extend each distribution $u_0^0,\,u^0_{l,p},\,\mathfrak{r}_{L+1}^{(m)\,0}\in \sD'(\bR^{d(n-1)} \less \{0\})$  
individually.

Due to part {\it (6)} of the lemma, the remainders
$$
\mathfrak{r}_{L+1}^{(m)\,0}\quad\text{with}\quad L\geq L_0:=D-d(n-1)
$$
can be extended by the direct extension \eqref{eq:direct-extension}.

The distributions $u^0_{l,p}$ ($l\geq 1$) and $u^0_0$ ($l=0$) scale almost homogeneously 
in~$x$ with degrees $(D{-}l)$. Thus, by
proposition~\ref{pr:extens-exist}, there exist extensions
$u_{l,p} \in \sD'(\bR^{d(n-1)})$ and $u_0 \in \sD'(\bR^{d(n-1)})$, 
respectively, which scale almost homogeneously with the same degree as
the corresponding $u^0_{\cdots}$-distributions. For $l>L_0$
the almost homogeneous extension is unique and agrees with the direct 
extension \eqref{eq:direct-extension}. For $0\leq l\leq L_0$ the
extension needs a mass scale $M_1 > 0$; we choose $M_1$ independent 
of~$m$, such that $\del_mu_{l,p} = 0$ and $\del_mu_0 = 0$. One may 
choose $M_1=M$.  

We have to maintain the relation
\be\label{eq:r-relation}
\mathfrak{r}_{L_1+1}^{(m)\,0}(x)=\mathfrak{r}_{L_2+1}^{(m)\,0}(x)+
\sum_{l=L_1+1}^{L_2} m^l\,\sum_{p=0}^{P_l} \bigl( \log \tfrac{m}{M} \bigr)^p\,u^0_{l,p}(x)
\ ,\quad 0\leq L_1<L_2\ .
\ee
For $L_1\geq L_0$ the extensions indeed satisfy this relation, because all 
distributions appearing in \eqref{eq:r-relation} are extended by the unique 
direct extension \eqref{eq:direct-extension}. For $L_1<L_0$ we fulfil 
\eqref{eq:r-relation} by {\it defining}
the extension of $\mathfrak{r}_{L_1+1}^{(m)\,0}$ by
$$
\mathfrak{r}_{L_1+1}^{(m)}(x):=\mathfrak{r}_{L_0+1}^{(m)}(x)+
\sum_{l=L_1+1}^{L_0} m^l\,\sum_{p=0}^{P_l} \bigl( \log \tfrac{m}{M} \bigr)^p\,u_{l,p}(x) 
\ \quad\text{for}\quad 0\leq L_1<L_0\ .
$$

An extension~$t_n^{(m)}\in \sD'(\bR^{d(n-1)})$ of~$t_n^{(m)\,0}$, 
which fulfils the sm-expansion
(with the same degree $D$ as~$t_n^{(m)\,0}$), is obtained by 
inserting the constructed extensions of the various distributions
into \eqref{eq:s-expansion}; it does not matter which $L$ we use, since the 
extensions fulfil \eqref{eq:r-relation}.

From the preceding subsection we only know that $t^0(A_1,...,A_n)$ satisfies
the sm-expansion for field {\it monomials} $A_1,...,A_n$. Hence, we have to explain, 
how the just described construction matches with the procedure \eqref{eq:AWI-extension}
(in which the extension is done first for {\it balanced fields}). To explain this, note that,
due to linearity of the map $\otimes_{i=1}^n A_i\mapsto t^0(A_1,\ldots,A_n)$, 
the sm-expansion holds for $t^0(A_1,\ldots,A_n)$ for all
$A_1,\ldots,A_n\in\sP_\homog$ (and not only for field {\it monomials}).
With that an extension $t(A_1,\ldots,A_n)$ which fulfills the sm-expansion 
can be constructed as just described for all $A_1,\ldots,A_n\in\sP_\bal\cap\sP_\homog$.
Symmetrization w.r.t.~permutations of $(A_1,x_1),...,(A_n,x_n)$ does not violate the 
sm-expansion. Then, by means of \eqref{eq:AWI-extension}, we construct
$t(A_1,...,A_n)$ for all  $A_1,\ldots,A_n\in\sP$. To complete the inductive step,
we have to show that, on the level of the extensions, 
the sm-expansion holds for all {\it monomials}
$A_1,\ldots,A_n$ (and not only for $A_1,\ldots,A_n\in\sP_\bal\cap\sP_\homog$). 
For this purpose we write arbitrary monomials
$A_i$ ($1\leq i\leq n$) as $A_i=\sum_{k_i}\del^{\beta_{ik_i}}B_{ik_i}$ with
$B_{ik_i}\in\sP_\bal\cap\sP_\homog\ $. Note that 
$\dim  B_{ik_i}+|\beta_{ik_i}|=\dim A_i\ ,\,\,\forall k_i$.
Then, $t(A_1,...,A_n)$ is given in terms of the distributions $t(B_{1k_1},...,B_{nk_n})$
by \eqref{eq:AWI-extension}. In this formula, each summand fulfils the sm-expansion with degree
$$
\sum_{i=1}^n\dim B_{ik_i}+\sum_{i=1}^n|\beta_{ik_i}|=\sum_{i=1}^n\dim A_i\ ,
$$
hence, this holds also for $t(A_1,\ldots,A_n)$.

\vspace{6pt}

The {\bf most general solution} of the system of axioms is obtained by adding to a particular 
solution $t^{(m)}(A_1,...,A_n)(x_1-x_n,...)$ a polynomial in derivatives of the delta
distribution which fulfils the sm-expansion:
\be\label{eq:nonunique}
\sum m^l\,(\log\tfrac{m}{M})^p\, C_{l,p,\beta}(A_1,\dots,A_n) 
\,\del^\beta\dl(x_1 - x_n,\dots,x_{n-1} - x_n)\ ,
\ee
where the sum runs over $l \in \bN_0$, $p \in \bN_0$ and
$\beta \in \bN_0^{d(n-1)}$, with the restrictions
\be\label{eq:nonunique1}
|\beta| + l = D - d(n - 1)\quad\text{and}\quad  p \leq P \,\,\,\text{for some}\,\,\,P< \infty\ ;
\ee
the numbers $C_{l,p,\beta}(A_1,\dots, A_n)\in \bC$ do not depend on~$m$.
In addition \eqref{eq:nonunique} has to be Lorentz covariant and invariant under 
permutations of $(A_1,x_1),\ldots ,$\newline $(A_n,x_n)$; the coefficients $C_{l,p,\beta}(A_1,\dots, A_n)$
are also restricted by further axioms as e.g.~unitarity.

\vspace{6pt}

We return to the assertion at the end of sect.~3: if we replace the axiom (k) by the
(possibly weaker) axioms (g) and (i$'$), the freedom of (re)normalization
\eqref{eq:nonunique}-\eqref{eq:nonunique1} does not get bigger. (This follows from the 
discussion in \eqref{eq:nonunique-scaling}-\eqref{eq:continuous-m}.) Therefore, 
the two systems of axioms are indeed equivalent.

\section{The scaling and mass expansion for a dimensionally regularized theory}

In \cite{Keller13} dimensional regularization in position space is introduced by a 
change of the order of the Bessel functions defining the propagators: the regularized Feynman 
propagator is of the form
\be\label{eq:reg-Feyn-prop}
\Delta^{F\,\zeta}_m(x)=\sum_{l=0}^\infty h_l^\zeta\,
M^{2\zeta}\,m^{2l}\,(-(x^2-i\epsilon))^{l+1-\tfrac{d}2+\zeta}
+\sum_{l=0}^\infty c_l^\zeta\,M^{2\zeta}\, m^{d-2+2l-2\zeta}\,(-x^2)^l\ ,
\ee
where $\zeta\in\Omega\less\{0\}$ for a neigborhood $\Omega\subset\bC$ of $0$; 
and $M>0$ is a mass parameter, the factor
$M^{2\zeta}$ is introduced to keep the mass dimension constant. 
The coefficients $h_l^\zeta,\,c_l^\zeta\in\bC$ do not depend on $(x,m)$.
In the limit $\zeta\to 0$, $\Delta^{F\,\zeta}_m(x)$
converges in a suitable sense to $\Delta^{F}_m(x)$. From \eqref{eq:reg-Feyn-prop} we see
that $\Delta^{F\,\zeta}_m(x)$ is homogeneous under $(x,m)\to(\rho x,m/\rho)$:
\be\label{eq:reg-Feyn-prop-scaling}
\rho^{d-2-2\zeta}\,\Dl_{\rho^{-1}m}^{F\,\zeta}(\rho x)=\Dl_m^{F\,\zeta}(x)\ .
\ee 

To find the sm-expansion for the so regularized theory, we study a product of derivated, 
regularized Feynman propagators -- with different $\zeta_{ij}$ for different arguments $(x_i-x_j)$,
since the Epstein-Glaser forest formula requires the ability to vary the regularization 
parameters independently in this way, see \cite{Keller13}. We only treat the 
{\it even dimensional} case.\footnote{In odd dimensions, $(\tfrac{m}{M})^{2p-2{\bf c}\bzeta}$
is replaced by $(\tfrac{m}{M})^{p-2{\bf c}\bzeta}\ ,\,\,p\in\bN_0$.}
For $x_i\not= x_j\,\,\forall i<j$, we obtain the structure
\be\label{eq:prod-reg-Feyn-prop}
\prod_{k=1}^Q \partial^{\beta_k}\Delta^{\zeta_{i_kj_k}}_{F,\,m}(x_{i_k}-x_{j_k})=
\sum_{|{\bf c}|+|{\bf h}|=Q} \sum_{p=0}^\infty (\tfrac{m}{M})^{2p-2{\bf c}\bzeta}\,
u_{p,{\bf c},{\bf h}}^\bzeta(x)\ ,
\ee
where $x:=(x_1-x_n,...,x_{n-1}-x_n)$, and
$h_{ij}\in\bN_0$ ($c_{ij}\in\bN_0$ resp.) is the number of 
$h$-lines ($c$-lines resp.) (i.e.~the propagator
is given by a $h_l^\zeta$-term ($c_l^\zeta$-term resp.)) connecting the vertices $x_i$ and $x_j$, and
$$
\bzeta:=(\zeta_{ij})_{i<j}\ ,\quad {\bf c}:=(c_{ij})_{i<j}\ ,\quad |{\bf c}|:=\sum_{i<j}c_{ij}\ ,\quad
{\bf c}\bzeta:=\sum_{i<j}c_{ij}\zeta_{ij}\ ,
$$
and ${\bf h},\,|{\bf h}|$ and ${\bf h}\bzeta$ are similarly defined.
In addition the $m$-independent distributions $u_{p,{\bf c},{\bf h}}^\bzeta(x)$ are homogeneous: 
\be
\rho^{\kappa}\,u_{p,{\bf c},{\bf h}}^\bzeta(\rho x)=u_{p,{\bf c},{\bf h}}^\bzeta(x)
\quad\text{with}\quad \kappa:=Q(d-2)-2p-2{\bf h}\bzeta+\sum_k|\beta_k|\ .
\ee
It follows that on the r.h.s.~of \eqref{eq:prod-reg-Feyn-prop} the sum 
$\sum_p (\tfrac{m}{M})^{2p-2{\bf c}\bzeta}\,u_{p,{\bf c},{\bf h}}^\bzeta$
is homogeneous under $(x,m)\to(\rho x,m/\rho)$ with degree
$$ 
\kappa+2p-2{\bf c}\bzeta =Q(d-2)+\sum_k|\beta_k|-2\,({\bf h}+{\bf c})\bzeta\ .
$$

This motivates to require the following version of the sm-expansion axiom for the
$\bzeta$-dependent regularized time-ordered product $T^{(m)\,\bzeta}\equiv (T^{(m)\,\bzeta}_n)$: 
for a field monomial $A=\prod_{j=1}^J \partial^{\beta_j}\vf$ let $|A|:=J$ and, similarly to \eqref{eq:t},
we define the vacuum expectation values $t^{(m)\,\bzeta}(A_1,\dots,A_n)\in\sD'(\bR^{d(n-1)})$.
In addition let $N:= {n\choose 2}$.
\begin{itemize} 
\item \textbf{Scaling and mass expansion ($d>2$ even):} There exists an open neighborhood 
$\Omega_n\subset\bC^N$ of the origin such that
for all field {\it monomials} $A_1,\dots,A_n\in \sP$, 
the distributions $t^{(m)\,\bzeta}(A_1,\dots,A_n)(x_1 - x_n,\dots, x_{n-1} - x_n)$
fulfil for $\bzeta\in\Omega_n\less\{0\}$
the regularized sm-expansion with degree $D=\sum_{k=1}^n\dim A_k\in\bN_0$ and $l=
\tfrac{1}2\,\sum_{k=1}^n |A_k|\in\bN_0$ lines; where the following definition is used:

\begin{defn} 
\label{df:sm-expansion-reg}
Let $\Lambda\subset\bC^N$ be an open set.
A distribution $f^{(m)\,\bzeta}\in\sD'(\bR^{d(n-1)}[\setminus\{0\}])$, depending on $m\geq 0$, 
fulfils for $\bzeta\in\Lambda$ the {\it regularized sm-expansion} with degree $D$ and
$l\in\bN_0$ lines, if it is analytic in $\bzeta\in\Lambda$, and if
for all $p,P\in\bN_0$ and ${\bf c},\,{\bf h}\in\bN_0^N$ with $|{\bf c}|,|{\bf h}|\leq l$, 
there exist $m$-independent distributions $u_{p,{\bf c},{\bf h}}^\bzeta\in\sD'(\bR^{d(n-1)}[\setminus\{0\}])$
and remainders $\mathfrak{r}_{P+1,{\bf c},{\bf h}}^{(m)\,\bzeta}\in\sD'(\bR^{d(n-1)}[\setminus\{0\}])$, 
such that
\be\label{eq:sm-exp-reg}  
f^{(m)\,\bzeta}(x)=\sum_{|{\bf c}|+|{\bf h}|=l}\Bigl[
\sum_{p=0}^P (\tfrac{m}{M})^{2p-2{\bf c}\bzeta}\,u_{p,{\bf c},{\bf h}}^\bzeta(x)+
\mathfrak{r}_{P+1,{\bf c},{\bf h}}^{(m)\,\bzeta}(x)\Bigr]\ ,\quad\forall P\in\bN_0
\ee
and $\forall\bzeta\in\Lambda\ $; in addition
\begin{itemize}
\item[(A)] for $p=0$ and ${\bf c}\not={\bf 0}$ we have $u_{0,{\bf c},{\bf h}}^\bzeta\equiv 0\,\,
\forall {\bf h}\ $, and for $m=0$ it holds
$f^{(0)\,\bzeta}=\sum_{|{\bf h}|=l}u_{0,{\bf 0},{\bf h}}^\bzeta\ $;
\item[(B)] for ${\bf h}={\bf 0}$ we have $u_{p,{\bf c},{\bf 0}}^\bzeta\in C^\infty\ $;
\item[(C)] $u_{p,{\bf c},{\bf h}}^\bzeta(x)$ is {\it homogeneous} 
(not only almost homogeneous) in $x$ with degree
\be
\kappa^\bzeta_{p,{\bf h}}:=D-2p-2{\bf h}\bzeta\ ;
\ee
\item[(D)] $\mathfrak{r}_{P+1,{\bf c},{\bf h}}^{(m)\,\bzeta}(x)$ is homogeneous 
under $(x,m)\to(\rho x,m/\rho)$ with degree
\be\label{eq:D-zeta-c-h}
D^\bzeta_{{\bf c},{\bf h}}=D-({\bf h}+{\bf c})\bzeta\ ;
\ee
\item[(E)] $\mathfrak{r}_{P+1,{\bf c},{\bf h}}^{(m)\,\bzeta}(x)$ is smooth in $m$ for $m>0$ and
\be
\lim_{m\downarrow 0} (\tfrac{m}{M})^{-2(P+1)+2{\bf c}\bzeta+\epsilon}\,\,
\mathfrak{r}_{P+1,{\bf c},{\bf h}}^{(m)\,\bzeta}=0
\quad\quad\forall\epsilon >0\ .
\ee
\end{itemize}
\end{defn}
\end{itemize}
Similarly to (C) and (D), the properties (A) and (B) are motivated 
by their validity for \eqref{eq:prod-reg-Feyn-prop}.
(B) is important for the extension of the distributions 
$u_{p,{\bf c},{\bf h}}^{\bzeta\,0}\in\sD'(\bR^{d(n-1)}\less\{0\})$:
for almost all values of $\bzeta\in\Lambda$ we have 
$\kappa^\bzeta\not\in d(n-1)+\bN_0$ (i.e.~we are in the 
much simpler case (i) of proposition \ref{pr:extens-exist}). 

Suitably modified, all statements of lemma \ref{lm:sm-expans-properties} 
hold true also for the regularized 
sm-expansion. The modifications are:\footnote{For shortness we do not specify the 
domain for $\bzeta$.} 
let $(D,l)$ be the degree and the 
number of lines in the regularized 
sm-expansion of the distribution $f^{(m)\,\bzeta}\in\sD'(\bR^{d(n-1)}[\setminus\{0\}])$.
\begin{itemize}
\item[{\it (1$'$)}] (No change for $m>0$.) In order that the limit $m\downarrow 0$ exists,
we assume that $\Re(\zeta_{ij})<\tfrac{1}{l}\,\,\forall i,j$ 
(which implies $\Re({\bf c}\bzeta)<1$). With that it holds
\be
\lim_{m\downarrow 0}f^{(m)\,\bzeta} =\sum_{|{\bf h}|=l} u^\bzeta_{0,{\bf 0},{\bf h}}=
f^{(0)\,\bzeta}\ .
\ee
\item[{\it (2$\,'$)}] 
Only the expression in the $\bigl[...\bigr]$-bracket of \eqref{eq:sm-exp-reg} 
(and not the complete $f^{(m)\,\bzeta}$)
is {\it homogeneous} under $(x,m)\to(\rho x,m/\rho)$, with 
degree $D^\bzeta_{{\bf c},{\bf h}}$ \eqref{eq:D-zeta-c-h}.
\item[{\it (3$\,'$)}] $\del_x^\beta f^{(m)\,\bzeta}(x)$ fulfils the 
regularized sm-expansion with $(D+|\beta|,l)$.
\item[{\it (4$'$)}] We formulate the statement in the form in which 
it is used in the inductive step of the construction of 
$T^{(m)\,\bzeta}$: let $\Delta^{+\,\zeta}_m$ be the regularized two-point 
function belonging to $\Delta^{F\,\zeta}_m\ $.\footnote{That is 
$\Delta^{F\,\zeta}_m(x)=\theta(x^0)\Delta^{+\,\zeta}_m(x)+\theta(-x^0)\Delta^{+\,\zeta}_m(-x)\ $.} 
We assume that
 $f_1^{(m)\,\bzeta_1}(x_1-x_s,...)\in\sD'(\bR^{d(s-1)})$ and  
$f_2^{(m)\,\bzeta_2}(x_{s+1}-x_n,...)\in\sD'(\bR^{d(n-s-1)})$
fulfil the regularized sm-expansion with $(D_1,l_1)$ and $(D_2,l_2)$, respectively. Then,
\be
f_1^{(m)\,\bzeta_1}(x_1-x_s,...)\, f_2^{(m)\,\bzeta_2}(x_{s+1}-x_n,...)\,
\prod_{k=1}^K\del^{\beta_k} \Dl^{+\,\zeta_{i_kj_k}}_m(x_{i_k}-x_{j_k})
\ee
(where $i_k\in \{1,...,s\}$ and $j_k\in \{s+1,...,n\}\,\,\forall k$),
which is an element of  $\sD'(\bR^{d(n-1)}\setminus\{0\})$,
satisfies  the regularized sm-expansion with
\be
\bzeta:=\bigl(\bzeta_1,\bzeta_2,(\zeta_{ij})_{i\in \{1,...,s\}}^{j\in \{s+1,...,n\}}\bigr)\ ,\quad
D=D_1+D_2+K(d-2)+\sum_{k=1}^K|\beta_k|
\ee
and $l=l_1+l_2+K\ $.
\item[{\it (5$\,'$)}] If we know that a given $f^{(m)\,\bzeta}$ fulfils the regularized sm-expansion with 
given numbers $(D,l)$, then the coefficients
$u_{p,{\bf c},{\bf h}}^\bzeta$ are uniquely determined.
\item[{\it (6$\,'$)}] $\sd(\mathfrak{r}_{P+1,{\bf c},{\bf h}}^{(m)\,\bzeta})\leq \Re(\kappa^\bzeta_{P+1,{\bf h}})=D-2(P+1)-2\Re({\bf h}\bzeta)\ $.
\end{itemize}

\begin{proof} {\it (1$'$)}, {\it (2$\,'$)} and {\it (6$\,'$)} are easy. 
(Note that {\it (2$\,'$)} and {\it (6$\,'$)}
are simpler to prove than the corresponding statements in  lemma \ref{lm:sm-expans-properties},
since $u_{p,{\bf c},{\bf h}}^\bzeta$ and $\mathfrak{r}_{P+1,{\bf c},{\bf h}}^{(m)\,\bzeta}$ scale even {\it homogeneously}.)

{\it (3$\,'$)} can be verified in the same way as in  lemma \ref{lm:sm-expans-properties}.

{\it (4$'$)} can be proved by proceeding analogously to the unregularized theory (see part {\it (4)} of 
lemma \ref{lm:sm-expans-properties} and sect.~4.1) and by using that $\Delta^{+\,\zeta}_m$ is
also of the form \eqref{eq:reg-Feyn-prop} (one only has to replace $(x^2-i\eps)$ by $(x^2-ix^0\eps)$).

To prove  {\it (5$\,'$)} let $\bzeta\in\Lambda$ be such that
\be
p-\Re({\bf c}\bzeta)\not=p'-\Re({\bf c'}\bzeta) \quad\forall (p,{\bf c})\not=(p',{\bf c'})
\,\quad\text{and}\,\quad
{\bf h}\bzeta\not={\bf h'}\bzeta \quad\forall {\bf h}\not={\bf h'}\ .
\ee
This excludes only a set of measure zero -- this is no harm, due to analyticity in $\bzeta$. 
The first condition implies that $f^{(m)\,\bzeta}$ is of the form
\be
f^{(m)\,\bzeta}=\sum_{i=1}^K U_i\,(\tfrac{m}{M})^{z_i}+
\mathfrak{r}^{(m)}_{K+1}\quad\text{with}\quad \Re(z_i)<\Re(z_{i+1})\quad\forall i
\ee
and $\lim_{m\downarrow 0} (\tfrac{m}{M})^{-z_K}\,\mathfrak{r}_{K+1}^{(m)}=0\ $,
where $K\in\bN$ is arbitrary.
The coefficients $U_i$ can be determined inductively: 
\be
U_n=\lim_{m\downarrow 0}\Bigl(f^{(m)\,\bzeta}-\sum_{i=1}^{n-1} U_i\,(\tfrac{m}{M})^{z_i}\Bigr)\,
(\tfrac{m}{M})^{-z_n}\ .
\ee
Finally from $U_i=\sum_{{\bf h}} u_{p,{\bf c},{\bf h}}^\bzeta$, where 
$z_i=2(p-{\bf c}\bzeta)$ and the sum is restricted by
$|{\bf h}|=l-|{\bf c}|$, a single summand is obtained by the projection
\be
u_{p,{\bf c},{\bf h}_0}^\bzeta=\frac{\prod_{{\bf h}\not={\bf h}_0}
\Bigl(D-2p-2{\bf h}\bzeta+\sum_r x_r\partial_{x_r}\Bigr)}
{\prod_{{\bf h}\not={\bf h}_0}2\,({\bf h}_0-{\bf h})\bzeta}\,\,U_i\ .
\ee
\end{proof}

Notice that for $f^{(m)\,\bzeta}=t^{(m)\,\bzeta}(A_1,...,A_n)$ (where $A_1,...,A_n$ are arbitrary field monomials)
the property {\it (2$\,'$)} is an equivalent formulation of the axiom 'Scaling' in \cite{Keller13}.

The system of axioms for the regularized time-ordered product $T^{(m)\,\bzeta}$ given
in \cite{Keller13} can now be modified as follows: similarly to the procedure in sect.~3, we replace the 
axioms 'Smoothness in $m^2$' and 'Scaling' by the sm-expansion axiom. Essentially by the same construction as 
in sect.~4, one obtains the general solution of the so modified system of axioms.

\section{Applications of the scaling and mass expansion}

The sm-expansion is very helpful for {\it practical computations}: 
choosing $L=L_0=D-d(n-1)$ it reduces the main problem -- the extension 
from $\sD'(\bR^{d(n-1)} \less \{0\})$ to $\sD'(\bR^{d(n-1)})$ --
to a minimal set of almost homogeneous scaling distributions
(namely $\{u^0_{l,p}\,|\,0\leq l\leq L_0,\,0\leq p\leq P_l\}$);
the direct extension \eqref{eq:direct-extension} of the remainder gives no computational work.
We illustrate this by the following examples.

\begin{ex}[setting sun diagram]
We study again the setting sun diagram in $d=4$ dimensions.
We have to extend
\be\label{eq:settingsun-unren}
t^0(\vf^3,\vf^3)(x)=(\Dl^F_m(x))^3\in\sD'(\bR^4 \less \{0\})\ ,
\ee
where $\Dl^F_m$ is the Feynman propagator. 
Due to \eqref{eq:Delta-plus} its sm-expansion can be written as
\be\label{eq:sm-Feynprop}
\Dl^F_m(x)=\frac{a_0}{X}+m^2\,\Bigl(\bigl(a_1\,\log(M^2X) 
+ A_1\bigr)+2a_1\,\log\tfrac{m}{M}\Bigr)+R_4^{(m)}(x)\ ,
\ee
where $X:=-(x^2 - i 0)\ $, with constants $a_0,\,a_1,\,A_1\in\bC$.
Due to $D=6,\,n=2$, we have $L_0=2$. Using that, we
insert \eqref{eq:sm-Feynprop} into \eqref{eq:settingsun-unren} and obtain
$$
t^0(\vf^3,\vf^3)(x)=u_0^0(x)+m^2\bigl(u^0_{2,0}(x)+u^0_{2,1}(x)\,
\log\tfrac{m}{M}\bigr)+\mathfrak{r}_4^{(m)\,0}(x)\ ,
$$
where 
\begin{align*}
u^0_0(x)=&\tfrac{a_0^3}{X^3}\ ,\quad 
u^0_{2,0}(x)=\tfrac{3\,a_0^2\,(a_1\,\log(M^2X) + A_1)}{X^2}\ ,\quad 
u^0_{2,1}(x)=\tfrac{6\,a_0^2\,a_1}{X^2}\ ,\\
\mathfrak{r}_4^{(m)}(x)=& 3\,R_4^{(m)}(x)\,(\Dl^F_m(x))^2
+3\,m^4\,(a_1\,\log(m^2X) + A_1)^2\,\tfrac{a_0}{X}\\
&+m^6\,(a_1\,\log(m^2X) + A_1)^3\ .
\end{align*}
Note that $u^0_{2l+1}=0\,\,\forall l\in\bN$.

The non-direct, almost homogeneous extensions of $u^0_0(x),\, u^0_{2,0}$ and $u^0_{2,1}$ 
can be computed by using differential renormalization 
(see e.g.~\cite[Appendix B]{DuetschF04} and references cited there) -- we use $M_1=M$ as renormalization
mass scale:
\begin{align}\label{eq:diffren}
u_0(x)&=a_0^3\,\square_x\square_x\ovl{\Bigl(\tfrac{\log(M^2X)}{32\,X}\Bigr)}
+C\,\square_x\delta(x) ,\notag\\ 
u_{2,0}(x)&=3\,a_0^2\,\Bigl[a_1\,\square_x\ovl{\Bigl(\tfrac{(\log(M^2X))^2+2\,\log(M^2X)}{8\,X}\Bigr)}
+A_1\, \square_x\ovl{\Bigl(\tfrac{\log(M^2X)}{4\,X}\Bigr)}\Bigr]
+C_0\,\delta(x)\ ,\notag\\ 
u_{2,1}(x)&=6\,a_0^2\,a_1\, \square_x\ovl{\Bigl(\tfrac{\log(M^2X)}{4\,X}\Bigr)}+C_1\,\delta(x)\ ,
\end{align}
where $C,\,C_0,\,C_1\in\bC$ are arbitrary constants.
These formulas have to be understood as follows: for $x\not= 0$ the derivatives can straightforwardly 
be computed and we obtain the corresponding $u^0_{...}$-distributions. However, the expressions in 
$\bigl(...\bigr)$-brackets have scaling degree $=2$, hence, by the direct extension
\eqref{eq:direct-extension} (denoted by an over-line), they are 
uniquely defined as elements of $\sD'(\bR^4)$, and also their derivatives are in $\sD'(\bR^4)$. 
Therefore, the r.h.~sides of \eqref{eq:diffren} are indeed extensions
of the corresponding $u^0_{...}$-distributions; and, obviously, they scale almost homogeneously. 

We end up with
\be\label{eq:settingsun-ren}
t(\vf^3,\vf^3)(x)=u_0(x)+m^2\bigl(u_{2,0}(x)+u_{2,1}(x)\,
\log\tfrac{m}{M}\bigr)+\mathfrak{r}_4^{(m)}(x)\in\sD'(\bR^4)\ ,
\ee
where $\mathfrak{r}_4^{(m)}$ is the direct extension of $\mathfrak{r}_4^{(m)\,0}$.
\end{ex}

\begin{ex}[setting sun with a hat]\label{exmp:settingsunhat}
Again in $d=4$ dimensions, we compute the ``divergent'' diagram
$$
\begin{tikzpicture}[thick,scale=1.5]
\useasboundingbox (0,-0.1) rectangle (1,0.6);
\filldraw (0,0) circle (1pt);
\filldraw (0.4,0.6) circle (1pt);
\filldraw (0.8,0) circle (1pt);
\draw (0,0) edge [out=30,in=150]  (0.8,0);
\draw (0,0) --  (0.8,0);
\draw (0,0) edge [out=-30,in=-150] (0.8,0);
\draw (0,0) -- (0.4,0.6);
\draw (0.4,0.6) -- (0.8,0);
\end{tikzpicture} 
$$
which contains the setting sun diagram as a ``divergent'' subdiagram.\footnote{A
diagram with $n$ vertices is ``divergent'', iff its scaling degree \eqref{eq:sd-def}
is greater or equal
to $d(n-1)$, i.e.~the direct extension \eqref{eq:direct-extension} does not apply.}
That is we have to extend
\be\label{eq:triangle-subdiv}
t^0(x,y)=t(\vf^3,\vf^3)(x-y)\,\Dl^F_m(x)\,\Dl^F_m(y)\in\sD'(\bR^8\less\{0\}) \ ,
\ee
to $\sD'(\bR^8)$, where $t(\vf^3,\vf^3)$ is given by 
\eqref{eq:settingsun-ren}. We have $D=10,\,n=3$ 
and, hence, $L_0=2$. The sm-expansion of $t^0(x,y)$ with $L=L_0=2$ is obtained by inserting
\eqref{eq:sm-Feynprop} and \eqref{eq:settingsun-ren} into \eqref{eq:triangle-subdiv}:
$$
t^0(x,y)=v_0^0(x,y)+m^2\bigl(v^0_{2,0}(x,y)+v^0_{2,1}(x,y)\,
\log\tfrac{m}{M}\bigr)+\mathfrak{q}_4^{(m)\,0}(x,y)\ ,
$$
where we use the letters $(v,\mathfrak{q})$ (instead of $(u,\mathfrak{r})$)
to avoid confusion with the distributions appearing in the sm-expansion of the
setting sun diagram. The $v^0_{...}$-distributions read:
\begin{align*}
v^0_0(x,y)&=u_0(x-y)\,\tfrac{a_0^2}{XY}\ ,\\ 
v^0_{2,0}(x,y)&=u_{2,0}(x-y)\,\tfrac{a_0^2}{XY}+
u_0(x-y)\,a_0\Bigl(\tfrac{a_1\,\log(M^2Y) + A_1}{X}+\tfrac{a_1\,\log(M^2X) + A_1}{Y}\Bigr)\ ,\\
v^0_{2,1}(x,y)&=u_{2,1}(x-y)\,\tfrac{a_0^2}{XY}+
u_0(x-y)\,2a_0a_1\Bigl(\tfrac{1}{X}+\tfrac{1}{Y}\Bigr)\ ,
\end{align*}
where $Y$ is defined analogously to $X$ \eqref{eq:sm-Feynprop}.

Due to the choice $L=L_0=2$, the direct extension applies to the remainder 
$\mathfrak{q}_4^{(m)\,0}(x,y)$. 
The almost homogeneous extension of the $v^0_{...}$-distributions is more involved, we use
an {\it analytic regularization} which respects the ($x\leftrightarrow y$)-symmetry,
it is related to the methods in \cite{Carme,Bettina,NST11,Keller13} and 
\cite[Sect.3.4]{Hollands08}:
\be
v^{\zeta\,0}(x,y):=v^0(x,y)\,(M^4XY)^\zeta\ ,\quad\quad v=v_0,\,v_{2,0},\,v_{2,1}\ ,
\ee
where $\zeta\in\bC\less\{0\}$, $|\zeta|$ sufficiently small. The factor $M^{4\zeta}$ is introduced for
dimensional reasons. 

For a general $\zeta$, also $v^{\zeta\,0}$ cannot be renormalized by the direct extension. However,
we gain by the regularization that $v^{\zeta\,0}$ scales almost homogeneously with a 
{\it non-integer degree} $D^\zeta=8-4\zeta$ (for $v^{\zeta\,0}_{2,0},\,v^{\zeta\,0}_{2,1}$) or
$D^\zeta=10-4\zeta$ (for $v^{\zeta\,0}_0$). Due to that, the almost homogeneous extension 
$v^\zeta(x,y)$ is unique (proposition \ref{pr:extens-exist}) and can be computed by differential 
renormalization as follows:\footnote{For $v_{2,1}^\zeta$
and $v_{2,0}^\zeta$ we use the extension method given in \cite[remark 4.9]{Keller13}, for
$v_0^\zeta$ we work with a further development of that method.}
writing $z:=(x,y)$, $\del_r z_r:=\del_{x^\mu}x^\mu+\del_{y^\mu}y^\mu$
and $\eta:=-4\zeta$, we obtain from
$$
(\del_r z_r+\eta)^2 v^{\zeta\,0}_{2,1}(z)=0
$$
the unique almost homogeneous extension
\be\label{eq:ext-21}
v^{\zeta}_{2,1}=\tfrac{-1}{\eta^2}\,\Bigl((2\eta-1)\,\del_r \ovl{(z_r\, v^{\zeta\,0}_{2,1})}
+\del_r \del_s\ovl{(z_rz_s\, v^{\zeta\,0}_{2,1})}\Bigr)\in\sD'(\bR^8)\ .
\ee
Again, the over-line denotes the direct extension \eqref{eq:direct-extension}, which exists since 
$\sd(z_{r_1}...z_{r_l}\,v^{\zeta\,0})=D^\zeta-l$.
For  $v^{\zeta\,0}_{2,0}$ the power of the almost homogeneous scaling is $2$, hence we have
$$
(\del_r z_r+\eta)^3 v^{\zeta\,0}_{2,0}(z)=0\ ,
$$
which yields
\begin{align}\label{eq:ext-20}
v^{\zeta}_{2,0}=&\tfrac{-1}{\eta^3}\,\Bigl((3\eta^2-3\eta+1)\,\del_r \ovl{(z_r\, v^{\zeta\,0}_{2,0})}
+(3\eta-3)\,\del_r \del_s\ovl{(z_rz_s\, v^{\zeta\,0}_{2,0})}\notag\\
&+\del_p \del_r \del_s\ovl{(z_pz_rz_s\, v^{\zeta\,0}_{2,0})}\Bigr)\ .
\end{align}
For $v^{\zeta\,0}_0$ we need at least $l=3$ factors $z_{r_i}$ in order that the direct extension
$\ovl{z_{r_1}...z_{r_l}\, v^{\zeta\,0}_0}$ exists. Hence, we proceed as follows: from
$$
(\del_r z_r+2+\eta)^2 v^{\zeta\,0}_0(z)=0
$$
we obtain
$$
v^{\zeta\,0}_0=\tfrac{-1}{(2+\eta)^2}\,\Bigl((3+2\eta)\,\del_s (z_s\, v^{\zeta\,0}_0)
+\del_r \del_s (z_rz_s\, v^{\zeta\,0}_0)\Bigr)\ ,
$$
analogously
$$
(\del_r z_r+1+\eta)^2 (z_s\,v^{\zeta\,0}_0(z))=0
$$
gives
$$
z_s\,v^{\zeta\,0}_0=\tfrac{-1}{(1+\eta)^2}\,\Bigl((1+2\eta)\,\del_r (z_rz_s\, v^{\zeta\,0}_0)
+\del_p\del_r (z_pz_rz_s\, v^{\zeta\,0}_0)\Bigr)\ ,
$$
and
$$
(\del_p z_p+\eta)^2 (z_rz_s\,v^{\zeta\,0}_0(z))=0
$$
yields
$$
z_rz_s\,v^{\zeta\,0}_0=\tfrac{-1}{\eta^2}\,\Bigl((2\eta-1)\,\del_p (z_pz_rz_s\, v^{\zeta\,0}_0)
+\del_p\del_q (z_pz_qz_rz_s\, v^{\zeta\,0}_0)\Bigr)\ .
$$
Inserting the lower equations into the upper ones and performing the direct extension
we get
\begin{align}\label{eq:ext-0}
v^{\zeta}_0=\tfrac{1}{\eta^2(1+\eta)^2(2+\eta)^2}\,\Bigl(&(2+2\eta-6\eta^2-4\eta^3)\,
\del_p\del_r\del_s \ovl{(z_pz_rz_s\, v^{\zeta\,0}_0)}\notag\\
&-(2+6\eta+3\eta^2)\,\del_p\del_q\del_r\del_s \ovl{(z_pz_qz_rz_s\, v^{\zeta\,0}_0)}\Bigr)\ .
\end{align}
Obviously, the extensions $v^\zeta$ scale almost homogeneously with the same degree $D^\zeta$
and the same power as the initial $v^{\zeta\,0}$ (in agreement with proposition \ref{pr:extens-exist});
in addition, the maps $\zeta\mapsto \langle v^\zeta,f\rangle$ are meromorphic in $\zeta$ for all
$f\in\sD(\bR^8)$, with a pole at $\zeta=0$ of order $2$ (for $v^{\zeta}_{2,1},\,v^{\zeta}_0$) or
$3$ (for $v^{\zeta}_{2,0}$). The latter shows explicitly that this extension method does not work for 
the unregularized theory (i.e.~$\zeta= 0$).

According to definition 4.2 in \cite{Keller13}, $v^\zeta\in\sD'(\bR^8)$ is a 'regularization' of
$v^0\in\sD'(\bR^8\less\{0\})$ in the sense that
\be\label{eq:regularisation}
\lim_{\zeta\to 0}\langle v^\zeta,g\rangle=\langle v_\omega,g\rangle
\quad\quad\forall g\in\sD_\omega(\bR^8)\ ,
\ee
where $v_\omega$ is the unique extension of $v^0$ to $\sD_\omega'(\bR^8)$ \eqref{eq:Domega}
with $\sd(v_\omega)=\sd(v^0)$; and $\omega=0$ (for $v^{0}_{2,0},\,v^{0}_{2,1}$) or
$\omega=2$ (for $v^{0}_0$). Namely, using the functions $\chi_\rho$ \eqref{eq:direct-extension}
and that $\lim_{\zeta\to 0} v^{\zeta\,0}=v^0$ in $\sD'(\bR^8\less\{0\})$, 
\eqref{eq:regularisation} can be verified as follows: 
\begin{align}
\langle v_\omega,g\rangle&=\lim_{\rho\to\infty}\langle v^0,\chi_\rho \,g\rangle=
\lim_{\rho\to\infty}\,\lim_{\zeta\to 0}\langle v^{\zeta\,0},\chi_\rho \,g\rangle\notag\\
&=\lim_{\zeta\to 0}\,\lim_{\rho\to\infty}\langle v^{\zeta\,0},\chi_\rho \,g\rangle
=\lim_{\zeta\to 0}\langle v^{\zeta},g\rangle\ .
\end{align}

Turning to the limit $\zeta\to 0$, Corollary 4.4 in \cite{Keller13} states that the minimally 
subtracted distribution
\be\label{eq:MS}
v^\MS:=\lim_{\zeta\to 0}\,(1-\pp)\,v^\zeta
\ee
($\pp$ denotes the principle part) is an extension of $v^0$ with $\sd(v^\MS)=\sd(v^0)$. 

Coming back to the explicit Laurent series $v^\zeta=\sum_{n=-L}^\infty\zeta^n\,v_{(n)}$ 
(where $L\in\bN$) of our example, 
we have to compute the coefficients $v_{(0)}=v^\MS$. Expanding (in $\zeta$) $(M^4XY)^\zeta$ 
and the rational functions of $\eta$
appearing in \eqref{eq:ext-21}, \eqref{eq:ext-20} and \eqref{eq:ext-0}, we obtain the 
following results for the general, almost homogeneous and Lorentz invariant extensions
$v=v^\MS+\sum_{|\beta|=\omega}C_\beta\,\del^\beta\delta$, which are $(x\leftrightarrow y)$-invariant:
\begin{align}\label{eq:v-final}
v_{2,1}=&\del_r \ovl{\bigl(z_r\, v^0_{2,1}\,[\tfrac{1}{32}\,(\log(M^4XY))^2
+\tfrac{1}2\,\log(M^4XY)]\bigr)}\notag\\
&-\del_r \del_s\ovl{\bigl(z_rz_s\, v^0_{2,1}\,\tfrac{1}{32}\,(\log(M^4XY))^2\bigr)}
+C_1\,\delta(x,y)\ ,\notag\\
v_{2,0}=&\del_r \ovl{\bigl(z_r\, v^0_{2,0}\,[\tfrac{(\log(M^4XY))^3}{384}+
\tfrac{3\,(\log(M^4XY))^2}{32}+\tfrac{3\,\log(M^4XY)}{4}]\bigr)}\notag\\
&-\del_r \del_s\ovl{\bigl(z_rz_s\, v^0_{2,0}\,[\tfrac{3\,(\log(M^4XY))^3}{384}+
\tfrac{3\,(\log(M^4XY))^2}{32}]\bigr)}\notag\\
&+\del_p\del_r \del_s\ovl{\bigl(z_pz_rz_s\, v^0_{2,0}\,\tfrac{(\log(M^4XY))^3}{384}\bigr)}
+C_0\,\delta(x,y)\ ,\notag\\
v_0=&\del_p\del_r\del_s \ovl{\bigl(z_pz_rz_s\, v^0_0\,[\tfrac{-1}8
+\tfrac{1}4\,\log(M^4XY)+\tfrac{1}{64}\,(\log(M^4XY))^2]\bigr)}\notag\\
&+\del_q\del_p\del_r \del_s\ovl{\bigl(z_qz_pz_rz_s\, v^0_0\,[\tfrac{7}8
-\tfrac{1}{64}\,(\log(M^4XY))^2]\bigr)}\notag\\
&+C_2\,(\square_x+\square_y)\delta(x,y)+C_3\,\del^x_\mu\del_y^\mu\,\delta(x,y)\ .
\end{align}
We explicitly see that these extensions scale almost homogeneously with the same degree as the 
pertinent $v^0$-distributions. From proposition \ref{pr:extens-exist} we know that the power 
of the $\log$'s may be increased at most by $1$; therefore, terms of higher orders in 
$\log(M^2X)$, $\log(M^2Y)$ and $\log(M^2(X-Y))$ must cancel out in \eqref{eq:v-final},
by identities for the derivatives.
\end{ex}

\begin{rem}[treatment of subdivergences] There is an essential 
difference between the renormalization method used in this example and the one 
given in \cite{Keller13}: we insert for the divergent subdiagram (i.e.~the setting sun)
the {\it renormalized} expression and, 
hence, in the limit $\zeta\to 0$ we have to care only about the overall divergence located on the 
thin diagonal $x=0=y$. According to the method in \cite{Keller13}, one inserts for the  
divergent subdiagram a {\it regularized} expression and, therefore, the limit which removes the 
regularization has to be done by means of the forest formula: one first subtracts the principle part
of the divergent subdiagram (which is localized on the partial diagonal $x-y=0$) and, after that, one 
subtracts the principle part of the overall diagram (which is localized on the thin diagonal).
\end{rem}

\section{Concluding remarks}

In most papers dealing with causal perturbation theory (in particular in the original work 
\cite{EpsteinG73}) the scaling degree axiom (shortly 'sd-axiom') is used, which restricts
extensions $t\in\sD'(\bR^{d(n-1)})$ of $t^0\in\sD'(\bR^{d(n-1)}\less\{0\})$ by the requirement
$\sd(t)=\sd(t^0)$. In the system of axioms proposed by this paper (see sects.~3 and 4) one may
replace the sm-expansion axiom by the weaker sd-axiom -- this yields a reasonable system of axioms.

To illustrate that the sm-expansion axiom restricts the set of allowed time-ordered products 
truly stronger,
we discuss the non-uniqueness of the inductive step $n=2\,\to\,n=3$ for 
the example 'setting sun with a hat':
taking also Lorentz invariance and $(x\leftrightarrow y)$-symmetry 
into account, the sd-axiom leaves the freedom to add a term of the form
\be
\Bigl(f_2(\tfrac{m}{M})\,(\square_x+\square_y)+f_3(\tfrac{m}{M})\,\del_\mu^x\del^\mu_y+
m^2\,f_1(\tfrac{m}{M})\Bigr)\,\delta(x,y)\ ,
\ee
where $M>0$ is a fixed mass scale and $f_1,\,f_2,\,f_3$ are arbitrary functions
$f_i\,:\,\bR\to\bC$ (the values are dimensionless). We have found that
the sm-expansion axiom restricts these functions to
\be
f_2(\tfrac{m}{M})=C_2\ ,\quad f_3(\tfrac{m}{M})= C_3\ ,\quad
f_1(\tfrac{m}{M})=C_0+C_1\,\log(\tfrac{m}{M})\ ,
\ee
with arbitrary constants $C_0,\,C_1,\,C_2,\,C_3\in\bC$.

Such a reduction of the freedom of (re)normalization by a 
refinement of the sd-axiom is certainly desirable.
As explained in \eqref{eq:nonunique-scaling}, almost homogeneous scaling (axiom (g)) does not suffice, 
it needs to be supplemented, or replaced by a stronger condition. 
In \cite{DuetschF04} this problem is solved by quantizing with a Hadamard function and requiring
as an additional axiom smoothness in $m\geq 0$. For time ordered products based on the Wightman
two-point function, we have shown that the sm-expansion axiom is well suited for a stronger 
version of the sd-axiom. 

As an outlook we mention that the sm-expansion axiom can be used to derive structural results about
the renormalization group flow, see \cite{Duetsch-rg-ssb}.

\appendix

\section{Extension of distributions from $\sD'(\bR^k \less \{0\})$ to $\sD'(\bR^k)$}
 
For the convenience of the reader we recall some main results about the 
extension of a given distribution $t^0\in\sD'(\bR^k \less \{0\})$ to $t\in\sD'(\bR^k)$, 
proofs are given e.g.~in \cite{BrunettiF00,DuetschF04}.

Steinmann's \textit{scaling degree} \cite{Steinmann71} of a distribution $f \in \sD'(\bR^k)$ or
$f \in \sD'(\bR^k \less \{0\})$ is defined by
\be\label{eq:sd-def}
\sd(f)
:= \inf\set{r \in \bR : \lim_{\rho\downto 0} \rho^r\,f(\rho x) = 0}\ .
\ee

Let $\om:=\sd(t^0)-k$ and introduce the subspace of test functions
\be\label{eq:Domega}
\sD_\om(\bR^k) := \set{h \in \sD(\bR^k)
: \del^\beta h(0) = 0 \text{ for } |\beta| \leq \om}\ .
\ee
Then, $t^0$ has a {\it unique} extension $t_\om$ to $\sD'_\om(\bR^k)$ satisfying
the condition $\sd(t_\om) = \sd(t^0)$. $t_\om$ is called the 'direct extension',
it can be obtained by the limit
\be\label{eq:direct-extension}
\duo< t_\om, h> := \lim_{\rho\to\infty} \duo< t^0, \chi_\rho h>\ ,\quad \quad 
h\in\sD_\om(\bR^k)\ ,
\ee 
where $\chi_\rho(x):=\chi(\rho x)$ and  $\chi \in C^\infty(\bR^k)$ is such that $0 \leq \chi(x) \leq 1$, 
$\chi(x) = 0$ for $|x| \leq 1$ and $\chi(x) = 1$ for $|x| \geq 2$.

In particular, for $\sd(t^0)<k$, the extension $t\in\sD'(\bR^k)$ is uniquely fixed by the requirement 
$\sd(t) = \sd(t^0)$ and it is given by the direct extension \eqref{eq:direct-extension}.

For $k\leq\sd(t^0)<\infty$, there are \emph{several} extensions
$t \in \sD'(\bR^k)$ fulfilling the condition $\sd(t) = \sd(t^0)$; the difference of two solutions 
is of the form $\sum_{|\beta|\leq \sd(t^0)-k} C_\beta \,\del^\beta\dl(x)$ with $C_\beta \in \bC$.

The main purpose of the sm-expansion is to reduce perturbative renormalization to
the extension of almost homogeneously scaling distributions.
The following proposition describes the possible
homogeneities of the extensions \cite{DuetschF04,HollandsW02,Hollands08,Hormander90}.

\begin{prop} 
\label{pr:extens-exist}
Let $t^0 \in \sD'(\bR^k \less \{0\})$ scale almost homogeneously with 
degree $D \in \bC$ and power $N_0 \in \bN$ (see \eqref{eq:scalingprop} with
$m\equiv 0$, or \cite[definition 2.4]{DuetschF04}). 
Then there exists an 
extension $t \in \sD'(\bR^k)$ which scales also almost homogeneously
with degree~$D$ and power $N_1 \geq N_0$:
\begin{itemize}
\item[(i)]
if $D \notin \bN_0 + k$, then $t$ is unique and $N_1 = N_0$;
\item[(ii)]
if $D \in \bN_0 + k$, then $t$ is non-unique and $N_1 = N_0$ or
$N_1 = N_0 + 1$. In this case, two solutions differ by a term
$\sum_{|\beta|=D-k} C_\beta \,\del^\beta\dl (x)$ (where $C_\beta \in \bC$ is arbitrary).
\end{itemize}
\end{prop}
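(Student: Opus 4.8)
The plan is to recast almost homogeneity in operator form and reduce everything to a finite linear-algebra problem on derivatives of the delta distribution. Writing $E:=\sum_r x_r\del_r$ for the Euler operator, the hypothesis is $(E+D)^{N_0}t^0=0$ on $\bR^k\less\{0\}$ while $(E+D)^{N_0-1}t^0\neq 0$. The one computation I would do at the outset is the action of $E+D$ on $\del^\beta\dl$: since $\del^\beta\dl$ is homogeneous of degree $-(k+|\beta|)$, one has $(E+D)\del^\beta\dl=(D-k-|\beta|)\,\del^\beta\dl$, so $(E+D)^{N}$ acts on $\del^\beta\dl$ as multiplication by the scalar $(D-k-|\beta|)^{N}$. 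The whole dichotomy of the proposition will come from whether this scalar can vanish, i.e.\ whether $D-k=|\beta|$ for some $\beta\in\bN_0^k$.

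For existence I would start from \emph{any} extension $t\in\sD'(\bR^k)$ with $\sd(t)=\sd(t^0)=\Re D$, which exists by the basic extension theory recalled in the appendix. The obstruction $w:=(E+D)^{N_0}t$ restricts to $(E+D)^{N_0}t^0=0$ away from the origin, hence is supported at $\{0\}$ and is a finite sum $w=\sum_\beta c_\beta\del^\beta\dl$; since $E$ does not raise the scaling degree one has $\sd(w)\leq\Re D$, forcing $|\beta|\leq\Re D-k$. In case (i), $D\notin\bN_0+k$ makes every factor $D-k-|\beta|$ nonzero, so replacing $t$ by $t-\sum_\beta c_\beta\,(D-k-|\beta|)^{-N_0}\,\del^\beta\dl$ kills $w$ and yields an almost homogeneous extension; its power is exactly $N_0$ because restricting to $\bR^k\less\{0\}$ recovers $(E+D)^{N_0-1}t^0\neq 0$. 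In case (ii), with $\om:=D-k\in\bN_0$, the same subtraction removes all terms with $|\beta|<\om$, but the resonant terms $|\beta|=\om$ cannot be cancelled; the residual $v:=\sum_{|\beta|=\om}c'_\beta\del^\beta\dl$ satisfies $(E+D)v=0$, so the adjusted $t$ obeys $(E+D)^{N_0+1}t=0$, giving power $N_1\in\{N_0,N_0+1\}$ (the value $N_0+1$ occurring precisely when $v\neq 0$).

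For (non)uniqueness I would take two almost homogeneous extensions and note that their difference is supported at $\{0\}$, hence equals $\sum_\beta e_\beta\del^\beta\dl$; applying a sufficiently high power of $E+D$ and using the eigenvalue computation gives $\sum_\beta e_\beta(D-k-|\beta|)^{N}\del^\beta\dl=0$, so $e_\beta=0$ unless $D-k-|\beta|=0$. In case (i) no such $\beta$ exists and the extension is unique; in case (ii) exactly the multi-indices with $|\beta|=D-k$ survive, which is the stated freedom $\sum_{|\beta|=D-k}C_\beta\del^\beta\dl$. The main obstacle I anticipate is the bookkeeping around the obstruction $w$: one must argue carefully that it is supported at the origin, that its scaling degree is controlled by $\Re D$ so that only the multi-indices with $|\beta|\leq\Re D-k$ occur, and---in case (ii)---that the resonant part is a genuine invariant that cannot be removed by any choice of local counterterm, which is what pins down both the possible jump in the power and the precise form of the residual freedom.
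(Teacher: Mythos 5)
Your argument is correct. Note that the paper itself contains no proof of Proposition \ref{pr:extens-exist}: it is recalled in the appendix ``for the convenience of the reader'', with proofs deferred to \cite{BrunettiF00,DuetschF04,HollandsW02,Hollands08,Hormander90}. Your route --- the eigenvalue computation $(\sum_r x_r\del_r+D)\,\del^\beta\dl=(D-k-|\beta|)\,\del^\beta\dl$, starting from an arbitrary extension with $\sd(t)=\sd(t^0)=\Re D$, observing that the obstruction $w=(\sum_r x_r\del_r+D)^{N_0}t$ is supported at the origin with $\sd(w)\leq\Re D$, subtracting the non-resonant part, and identifying the resonant multi-indices $|\beta|=D-k$ as the source of both the possible jump of the power and the residual freedom --- is precisely the standard argument of those references, and each step you sketch goes through. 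The remaining loose ends are minor: your convention for the ``power'' is shifted by one relative to the paper's axiom (g) (where the power is $N-1$ for the minimal $N$), and the invariance of the resonant part under local counterterms, which you flag as an anticipated obstacle, follows at once from $(D-k-|\beta|)^{N_0}=0$ for $|\beta|=D-k$; it is in any case not needed for the proposition as stated, since both values of $N_1$ are allowed.
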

In case (i) the unique $t$ can be computed quite easily: if $\Re D< k$
it agrees with the direct extension of $t^0$ \eqref{eq:direct-extension}; otherwise
it can be computed by differential renormalization, see \cite[sect.~4.4]{Keller13}
and sect.~6.

\subsection*{Acknowledgment}
The author profited a lot from stimulating discussions with 
Philippe Blanchard, Klaus Fredenhagen,
Jos{\'e} M. Gracia-Bond{\'i}a, Katarzyna Rejzner and Joseph~C. V\'arilly. 
During working at this paper the author was mainly at the Max Planck Institute 
for Mathematics in the Sciences, Leipzig; he thanks Eberhard Zeidler for the invitations to 
Leipzig and for enlightening discussions.

\end{document}